\newcommand{\cnpg}[1]{cnp({#1})}
\newcommand{\distgg}{d_{GG}}
\newcommand{\distgcnp}{d_{GCNP}}
\newcommand{\mcng}{\textsf{MCNG}}
\renewcommand{\c}{\vec{c}}
\renewcommand{\u}{\vec{u}}
\renewcommand{\v}{\vec{v}}
\newcommand{\w}{\vec{w}}
\newcommand{\s}[1]{\langle \beta_{#1} \rangle}
\newcommand{\e}[1]{\alpha_{#1}}
\newcommand{\gevent}[1]{\langle {#1} \rangle}
\renewcommand{\S}{{\mathcal{S}}}
\newcommand{\cnp}[1]{\langle {#1} \rangle}
\newtheorem{nclaim}{Claim}
\title{Genomic Problems Involving Copy Number Profiles: Complexity and Algorithms}
\titlerunning{Genomic Problems Involving Copy Number Profiles}
\author{Manuel Lafond}{Department of Computer Science, Universite de Sherbrooke, Sherbrooke, Quebec J1K 2R1, Canada}{manuel.lafond@usherbrooke.ca}{}{}
\author{Binhai Zhu}{Gianforte School of Computing, Montana State University, Bozeman, MT 59717, USA}{bhz@montana.edu}{}{}
\author{Peng Zou}{Gianforte School of Computing, Montana State University, Bozeman, MT 59717, USA}{peng.zou@student.montana.edu}{}{}
\authorrunning{Lafond, Zhu and Zou.}
\keywords{Computational genomics, cancer genomics, copy number profiles, NP-hardness, approximation algorithms, FPT algorithms}
\begin{document}
\maketitle  
\begin{abstract}
Recently, due to the genomic sequence analysis in several types of cancer,
the genomic data based on {\em copy number profiles} ({\em CNP} for short)
are getting more and more popular. A CNP is a vector where each
component is a non-negative integer representing the number of copies of a specific
gene or segment of interest. The motivation is that in the late stage of certain types of cancer,
the genomes are progressing rapidly by segmental duplications and deletions
hence obtaining the exact sequences is becoming more difficult. Instead, in
this case, the number of copies of important genes can be predicted from expression analysis and carries important biological
information. Therefore, a lot of research have been started to analyze the
genomic data represented in CNP's.

In this paper, we present two streams of results. The first is the negative
results on two open problems regarding the computational complexity of the
Minimum Copy Number Generation (MCNG) problem posed by Qingge et al. in
2018. The \emph{Minimum Copy Number Generation} (MCNG) is defined as follows:
given a string $S$ over a gene set $\Sigma$ (with $|\Sigma|=n$) and a 
CNP $C$, compute a string $T$ from $S$, with the minimum number of
segmental duplications and deletions, such that $cnp(T)=C$. It was shown by
Qingge et al. that the problem is NP-hard if the duplications are tandem and
they left the open question of whether the problem remains NP-hard if
arbitrary duplications are used. We answer this question affirmatively in this
paper; in fact, we prove that it is NP-hard to even obtain a constant factor
approximation.   This is achieved through a general-purpose lemma on set-cover reductions that require an exact cover in one direction, but not the other, which might be of independent interest.  We also prove that the corresponding parameterized version is
W[1]-hard, answering another open question by Qingge et al.

The other result is positive and is based on a new (and more general)
problem regarding CNP's. The \emph{Copy Number Profile Conforming (CNPC)}
problem is formally defined as follows: given two CNP's $C_1$ and $C_2$,
compute two strings $S_1$ and $S_2$ with $cnp(S_1)=C_1$ and $cnp(S_2)=C_2$
such that the distance between $S_1$ and $S_2$, $d(S_1,S_2)$, is minimized.
Here, $d(S_1,S_2)$ is a very general term, which means it could be any genome
rearrangement distance (like reversal, transposition, and tandem
duplication, etc). We make the first step by showing that if $d(S_1,S_2)$ is
measured by the breakpoint distance then the problem is polynomially solvable.
We expect that this will trigger some related research along the line in the
near future.
\end{abstract}

\section{Introduction}

In cancer genomics research, intra-tumor genetic heterogeneity is one of the
central problems \cite{marusyk2012intra,navin2010inferring,shah2009mutational}.
Understanding the origins of cancer cell diversity could help cancer prognostics \cite{cooke2011evolution,maley2006genetic} and
also help explain drug resistance \cite{cooke2011intra,cowin2012lrp1b}. It is
known for some types of cancers, such as
high-grade serous ovarian cancer (HGSOC), that heterogeneity is mainly acquired through
endoreduplications and genome rearrangements. These result in aberrant
{\em copy number profiles (CNPs)} --- nonnegative integer vectors representing
the numbers of genes occurring in a genome \cite{cancer2011integrated}.
To understand how the cancer progresses, an evolutionary tree is certainly
desirable, and producing a valid evolutionary tree based on these
genomic data becomes a new problem. In \cite{schwarz2014phylogenetic}, Schwarz
et~al. proposed a way to construct a phylogenetic tree directly from integer
copy number profiles, the underlying problem being to convert CNPs into one another using the
minimum number of duplications/deletions \cite{shamir2016linear}.

In \cite{qingge2018}, another fundamental problem was proposed. The motivation
is that in the early stages of cancer, when large numbers of endoreduplications
are still rare, genome sequencing is still possible. However, in the later
stage we might only be able to obtain genomic data in the form of CNPs. 
This leads to the problem of comparing a sequenced genome with a genome with only copy-number information.

Given a genome $G$ represented as a string and a copy number profile $\c$, the
{\em Minimum Copy Number Generation (MCNG)} problem asks for the minimum number of
deletions and duplications needed to transform $G$ into any genome in which
each character occurs as many times as specified by $\c$.   
Qingge et al. proved that the problem is NP-hard when the duplications are
restricted to be tandem and posed several open questions: (1) Is the problem
NP-hard when the duplications are arbitrary? (2) Does the problem admit
a decent approximation? (3) Is the problem fixed-parameter tractable (FPT)? 
In this paper, we answer all these three open questions. We show that MCNG is
NP-hard to approximate within any constant factor, and that it is W[1]-hard
when parameterized by the solution size.
The inapproximability follows from a new general-purpose lemma on set-cover reductions that require an exact cover in one direction, but not the other. The W[1]-hardness uses a new set-cover variant in which every optimal solution is an exact cover.  These set-cover extensions can make reductions to other problems easier, and may be of independent interest.

We also consider a new fundamental problem called {\em Copy Number Profile
Conforming (CNPC)}, which is defined as follows. Given two CNP's $\c_1$ and
$\vec{c}_2$, compute two strings/genomes $S_1$ and $S_2$ with $cnp(S_1)=\c_1$ and
$cnp(S_2)=\c_2$ such that the distance between $S_1$ and $S_2$, $d(S_1,S_2)$,
is minimized. The distance $d(S_1,S_2)$ could be general, which means it could
be any genome rearrangement distance (such as reversal, transposition, and
tandem duplication, etc). We make the first step by showing that if
$d(S_1,S_2)$ is measured by the breakpoint distance then the problem is
polynomially solvable.


\section{Preliminaries}

A genome $G$ is a string, i.e. a sequence of characters, all of which belong
to some alphabet $\Sigma$ (the characters of $G$ can be interpreted as genes
or segments). We use genome and string interchangeably in this paper, when
the context is clear.  A \emph{substring} of $G$ is a sequence of contiguous characters
that occur in $G$, and a \emph{subsequence} is a string that can be obtained
from $G$ by deleting some characters.  We write $G[p]$ to denote the character
at position $p$ of $G$ (the first position being $1$), and we write $G[i .. j]$
for the substring of $G$ from positions $i$ to $j$, inclusively.  For
$s \in \Sigma$, we write $G - s$ to denote the subsequence of $G$ obtained by removing all occurrences of $s$.

We represent an alphabet as an ordered list $\Sigma = (s_1, s_2, \ldots, s_m)$ of distinct characters.  Slightly abusing notation, we may write $s \in \Sigma$ if $s$ is a member of this list.
We write $n_s(G)$ to denote the number of occurrences of $s \in \Sigma$ in a genome $G$.
A \emph{Copy-Number Profile (or CNP)} on $\Sigma$ is a vector $\c = \cnp{c_1, \ldots, c_{|\Sigma|}}$ that associates each character $s_i$ of the alphabet with a non-negative integer $c_i \in \mathbb{N}$.  
We may write $\c(s)$ to denote the number associated with $s \in \Sigma$ in $\c$.
We write $\c - s$ to denote the CNP obtained from $\c$ by setting $\c(s) = 0$.

The \emph{Copy Number Profile (CNP)} of genome $G$, denoted $\cnpg{G}$, is the vector of occurrences of all characters of $\Sigma$.   
Formally\footnote{Note that in the theory of formal languages, the CNP of a string is called the \emph{Parikh vector}.}, 
$$\cnpg{G} = \cnp{ n_{s_1}(G), n_{s_2}(G), \ldots, n_{s_m}(G) }.$$
For example, if $\Sigma = (a,b,c)$ and $G = abbcbbcca$, 
then $\cnpg{G} = \cnp{2, 4, 3}$ and $\c(a)=2$.
\newline

\noindent
\textbf{Deletions and duplications on strings}

\noindent
We now describe the two string events of \emph{deletion} and \emph{duplication}.  Both are illustrated in Figure~\ref{fig:example-seqs}.

\vspace{-5mm}

\begin{figure}[htbp]
\centering
\begin{eqnarray*}
Sequence & ~~~~~~~~~~~~~Operations
\\
 G_1= abbc \cdot \text{\sout{\ensuremath{cab}}}\cdot cab & ~~~~~~~~del(4,6)
\\
 G_2=a\cdot \text{\underline{\ensuremath{bbcc}}}\cdot ab & ~~~~~~~~~~~dup(2,5,6)
\\
G_3=abbcca\cdot \text{\underline{\ensuremath{bbcc}}}\cdot b & ~~
\end{eqnarray*}
\caption{Three strings (or toy genomes), $G_1, G_2$ and $G_3$.
From $G_1$ to $G_2$, a deletion is applied to $G_1[4..6]$.  From $G_2$ to $G_3$, a duplication is applied to $G_2[2..5]$, with the copy  inserted after position 6.}
\label{fig:example-seqs}
\end{figure}

Given a genome $G$, a \emph{deletion} on $G$ takes a substring of $G$ and removes it.  Deletions are denoted by a pair $(i, j)$ of the positions of the substring to remove.  
Applying deletion $(i, j)$ to $G$ transforms 
$G$ into $G[1 .. i - 1]G[j + 1 .. n]$.

A \emph{duplication} on $G$ takes a substring of $G$, copies it and inserts the copy anywhere in $G$, except inside the copied substring.  A duplication is defined by a triple $(i, j, p)$ where $G[i .. j]$ is the string to duplicate and $p \in \{0, 1, \ldots, i - 1, j, \ldots n\}$ is the position \emph{after} which we insert (inserting after $0$ prepends the copied substring to $G$).
Applying duplication $(i, j, p)$ to $G$ transforms $G$ into 
$G[1 .. p]G[i .. j]G[p + 1 .. n]$.

An \emph{event} is either a deletion or a duplication. If $G$ is a genome and $e$ is an event , we write $G\gevent{e}$ to denote the genome obtained by applying $e$ on $G$.  Given a sequence $E = (e_1, \ldots, e_k)$ of events, we define $G\gevent{E} = G\gevent{e_1}\gevent{e_2} \ldots \gevent{e_k}$ as the genome obtained by successively applying the events of $E$ to $G$.  We may also write $G\gevent{e_1, \ldots, e_k}$ instead of $G\gevent{(e_1, \ldots, e_k)}$.

The most natural application of the above events is to compare genomes.

\begin{definition}
Let $G$ and $G'$ be two strings over alphabet $\Sigma$.  The \emph{Genome-to-Genome} distance between $G$ and $G'$, denoted $\distgg(G, G')$, is the size of the smallest sequence of events $E$ satisfying $G\gevent{E} = G'$.
\end{definition}

We also define a distance between a genome $G$ and a CNP $\c$, which is the minimum number of events to apply to $G$ to obtain a genome with CNP $\c$.

\begin{definition}
Let $G$ be a genome and $\c$ be a CNP, both over alphabet $\Sigma$.  The \emph{Genome-to-CNP} distance between $G$ and $\c$, denoted $\distgcnp(G, \c)$, is the size of the smallest sequence of events $E$ satisfying $\cnpg{G\gevent{E}} = \c$. 
\end{definition}

The above definition leads to the following problem, which was first studied in \cite{qingge2018}. 

\vspace{3mm}

\noindent
The \textsf{Minimum Copy Number Generation  (MCNG)} problem:
\\
\noindent
\emph{Instance:} a genome $G$ and a CNP $\c$ over alphabet $\Sigma$.\\
\noindent
\emph{Task:} compute $\distgcnp(G, \c)$.

Qingge et al. proved that the MCNG problem is NP-hard when all the duplications
are restricted to be tandem \cite{qingge2018}. In the next section, we prove
that this problem is not only NP-hard, but also NP-hard to approximate
within any constant factor.

\section{Hardness of Approximation for MCNG}

In this section, we show that the $\distgcnp$~distance is hard to approximate within any constant factor.  
This result actually holds if only deletions on $G$ are allowed.  This restriction makes the proof significantly simpler, so we first analyze the deletions-only case.  We then extend this result to deletions \emph{and} duplications.

Both proofs rely on a reduction from SET-COVER.  
Recall that in SET-COVER, we are given a collection of sets $\S = \{S_1, S_2, \ldots, S_n\}$ over universe $U = \{u_1, u_2, \ldots, u_m\} = \bigcup_{S_i \in \S}S_i$, and we are asked to find a set cover of $\S$ having minimum cardinality (a set cover of $\S$ is a subset $\S^* \subseteq \S$ such that $\bigcup_{S \in \S^*}S = U$).  If $\S'$ is a set cover in which no two sets intersect, then $\S'$ is called an \emph{exact cover}.

There is one interesting feature (or constraint) of our reduction $g$, which transforms a SET-COVER instance $\S$ into a MCNG instance $g(\S)$.  A set cover $\S^*$ only works on $g(\S)$ if $\S^*$ is actually an exact cover, and a solution for $g(\S)$ can be turned into a set cover for $\S^*$ that is not necessarily exact.  Thus we are unable to reduce directly from either SET-COVER nor its exact version.  We provide a general-purpose lemma for such situations, and our reductions serve as an example of its usefulness.

The proof relies on a result on $t$-SET-COVER,the special case of SET-COVER in which every given set contains at most $t$ elements.  It is known that for any constant $t \geq 3$, the $t$-SET-COVER problem is hard to approximate within a factor $\ln t - c \ln \ln t$ for some constant $c$ not depending on $t$
\cite{trevisan2001}.

\begin{lemma}\label{lem:general-inapprox}
Let $\mathcal{B}$ be a minimization problem, and 
let $g$ be a function that transforms any SET-COVER instance $\S$ into an instance $g(\S)$ of $\mathcal{B}$ in polynomial time. 
Assume that both the following statements hold:
\begin{itemize}
    \item 
    any exact cover $\S^*$ of $\S$ of cardinality at most $k$ can be transformed in polynomial time into a solution of value at most $k$ for $g(\S)$;
    
    \item 
    any solution of value at most $k$ for $g(\S)$ can be transformed in polynomial time into a set cover of $\S$ of cardinality at most $k$.
    
\end{itemize}

Then unless P = NP, there is no constant factor approximation algorithm for $\mathcal{B}$.
\end{lemma}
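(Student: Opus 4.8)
The plan is to exploit the gap between the two directions of $g$. Note that the first condition only promises that \emph{exact} covers transfer to solutions of $g(\S)$, whereas the second condition produces a set cover (not necessarily exact) from any solution. The key observation linking these is that on any SET-COVER instance $\S$, the minimum size of an exact cover and the minimum size of an ordinary set cover can be related: an exact cover of minimum size is at least as large as a minimum set cover, but crucially, if $\S$ admits a set cover of size $k$ then (after possibly discarding redundant sets, or on instances where covers are "tight") one can control exact covers too. Rather than fight this in general, the standard trick is to reduce from $t$-SET-COVER on instances where we \emph{already know} the optimum, via the structure used in the inapproximability result of Trevisan~\cite{trevisan2001}: there, hardness is proved by showing it is NP-hard to distinguish instances with a set cover of size $\beta$ from instances where every set cover has size $\geq (\ln t - c\ln\ln t)\beta$. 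In the "yes" case the small cover can moreover be taken to be an exact cover (or the instance can be padded so that a small cover exists iff a small exact cover exists). I would state and use precisely this form of $t$-SET-COVER hardness.

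With that in hand the argument is a routine gap-preservation. First I would fix the constant $t$ large enough that $\ln t - c \ln\ln t > \rho$ for the target approximation factor $\rho$; this is possible since $\ln t - c\ln\ln t \to \infty$. Then, given a $t$-SET-COVER instance $\S$ produced by the hardness reduction, with the promise that either $\S$ has an exact cover of size at most $\beta$, or every set cover of $\S$ has size at least $\rho\beta$, I apply $g$ to obtain $g(\S)$. In the first case, the first hypothesis of the lemma converts the exact cover of size $\le \beta$ into a solution of $g(\S)$ of value $\le \beta$, so $\mathrm{OPT}(g(\S)) \le \beta$. In the second case, suppose for contradiction $g(\S)$ had a solution of value $< \rho\beta$; the second hypothesis of the lemma converts it into a set cover of $\S$ of size $< \rho\beta$, contradicting the promise. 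Hence $\mathrm{OPT}(g(\S)) \ge \rho\beta$. Thus a $\rho$-approximation for $\mathcal{B}$ run on $g(\S)$ would distinguish the two cases in polynomial time, which would imply P $=$ NP. Since $g$ is polynomial-time computable, the whole procedure is polynomial.

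The main obstacle is the first paragraph: making rigorous the claim that the known $t$-SET-COVER inapproximability can be taken in a form where the "yes" instances have a small \emph{exact} cover rather than merely a small set cover. One clean way around it is to pad: given any SET-COVER instance, add for each element $u_i$ a singleton set $\{u_i\}$; then any set cover can be converted to an exact cover of no larger size (greedily replace each chosen set by the singletons of its still-uncovered elements), so the minimum exact cover size equals the minimum set cover size, while $t$-boundedness is preserved (singletons have size $1 \le t$). Applying this padding before $g$ makes the first hypothesis directly applicable, and the argument goes through verbatim. I would present the padding as the first lemma-internal step, then carry out the gap argument as above.
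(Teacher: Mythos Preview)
Your overall strategy---reduce from $t$-SET-COVER with $t$ chosen large enough that $\ln t - c\ln\ln t$ exceeds the putative approximation ratio, then carry the gap through $g$---is exactly the route the paper takes. The difficulty, as you correctly identify, is bridging the asymmetry: you need the small cover in the ``yes'' case to be exact so that the first hypothesis applies.

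Your proposed bridge, however, does not work. Adding all singletons $\{u_i\}$ does \emph{not} force the minimum exact cover size to equal the minimum set cover size. Take $U=\{1,2,3,4,5\}$ and $\S=\{\{1,2,3\},\{3,4,5\}\}$. After padding with singletons, the minimum set cover still has size $2$, but every exact cover has size at least $3$: the two large sets overlap at $3$, so any partition of $U$ into members of the padded family uses at most one of them and then at least two singletons. Your parenthetical procedure (``replace each chosen set by the singletons of its still-uncovered elements'') produces one singleton per newly covered element, which in general is many more than one set per original $S_i$, so it cannot yield an exact cover of size at most $k$. The claim ``minimum exact cover size equals minimum set cover size'' is therefore false for this padding, and the gap argument collapses.

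The paper's fix is different and is precisely where the constant $t$ earns its keep: instead of adding singletons, it passes from $\S$ to $g'(\S)=\{S'\subseteq S : S\in\S\}$, the closure of $\S$ under taking subsets. Since every set has at most $t$ elements, $|g'(\S)|\le 2^t|\S|$, which is polynomial in $|\S|$ because $t$ is a constant. Now any set cover $\{S_1,\ldots,S_k\}$ of $\S$ becomes the exact cover $\{S_1,\,S_2\setminus S_1,\,\ldots,\,S_k\setminus\bigcup_{i<k}S_i\}$ of $g'(\S)$, of the same cardinality, because each of these differences is a subset of some $S_i$ and hence belongs to $g'(\S)$. Conversely, any set cover of $g'(\S)$ lifts to a set cover of $\S$ of the same size by replacing each chosen subset with a superset in $\S$. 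This makes the optimum of $g'(\S)$ equal to that of $\S$ while guaranteeing an exact optimum, and then the two hypotheses on $g$ apply cleanly to $g(g'(\S))$.

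Your first idea---that Trevisan's hard instances might already have small exact covers in the ``yes'' case---may well be true, but you do not establish it, and it would require opening up the PCP construction. The subset-closure trick avoids that entirely.
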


\begin{proof}
Suppose for contradiction that $\mathcal{B}$ admits a factor $b$ approximation for some constant $b$.  Choose any constant $t$ such that $t$-SET-COVER is hard to approximate within factor $\ln t - c \ln \ln t$, and such that
$b < \ln t - c \ln \ln t$.  Note that $t$ might be exponentially larger than $b$, but is still a constant.

Now, let $\S$ be an instance of $t$-SET-COVER over the universe $U = \{u_1, \ldots, u_m\}$.  
Consider the intermediate reduction $g'$ that transforms $\S$ into another $t$-SET-COVER instance $g'(\S) = \{ S' \subseteq S : S \in \S \}$.  
Since $t$ is a constant, $g(\S)$ has $O(|\S|)$ sets and this can be carried out in polynomial time.  

Now define $\S' = g'(\S)$ and consider the instance $B = g(\S') = g(g'(\S))$.
By the assumptions of the lemma, a solution for $B$ of value $k$ yields a set cover $\S^*$ for $\S'$.  Clearly, $\S^*$ can be transformed into a set cover for instance $\S$: for each $S' \in \S^*$, there exists  $S \in \S$ such that $S' \subseteq S$, so we get a set cover for $\S$ by adding this corresponding superset for each $S \in \S^*$.  Thus $B$ yields a set cover of $\S$ with at most $k$ sets.

In the other direction, consider a set cover $\S^* = \{S_1, \ldots, S_k\}$ of $\S$ with $k$ sets.  This easily translates into an \emph{exact} cover of $\S'$ with $k$ sets by taking the collection 

$$\{S_1, S_2 \setminus S_1, S_3 \setminus (S_1 \cup S_2), \ldots, S_k \setminus \bigcup_{i = 1}^{k-1} S_i\}\}.
$$

By the assumptions of the lemma, this exact cover can then be transformed into a solution of value at most $k$ for instance $B$.

Therefore, $\S$ has a set cover of cardinality at most $k$ if and only if $B$ has a solution of value at most $k$.  Since there is a correspondence between the solution values of the two problems, a factor $b$ approximation for $\mathcal{B}$ would provide a factor $b < \ln t - c \ln \ln t$ approximation for $t$-SET-COVER.  
\end{proof}


\subsection{Constructing genomes and CNPs from SET-COVER instances}\label{subsec:construction}

All of our hardness results rely on Lemma~\ref{lem:general-inapprox}.  We need to provide a reduction from SET-COVER to~\mcng~and prove that both assumptions of the lemma are satisfied.

\begin{figure}
\centering
\[
    S_1 = \{1,2,3\} \quad S_2 = \{1,3,4\} \quad S_3 = \{2,3,5\}
\]
\[
G = \s{S_1}\e{1}\e{2}\e{3}\s{S_2}\e{1}\e{3}\e{4}\s{S_3}\e{2}\e{3}\e{5}
\]
\[
\c(\e{1}) = \c(\e{2}) = 1 \quad \c(\e{3}) = 2 \quad \c(\e{4}) = \c(\e{5}) = 0
\]
\caption{An example of our construction, with $\S = \{S_1, S_2, S_3\}$ and $U = \{1,2,3,4,5\}$.}\label{fig:reduction_example}

\end{figure}

This reduction is the same for deletions-only and deletions-and-duplications.  Given $\S$ and $U$, we construct a genome $G$ and a CNP $\c$ as follows (an example is illustrated in Figure~\ref{fig:reduction_example}).
The alphabet is $\Sigma = \Sigma_{\S} \cup \Sigma_U$, where ${\Sigma_{\S} := \{\s{S_i} : S_i \in \S \}}$ and ${\Sigma_U := \{\e{u_i} : u_i \in U\}}$.  Thus, there is one character for each set of $\S$ and each element of $U$.
Here, each $\s{S_i}$ is a character that will serve as a separator between characters to delete. 
For a set $S_i \in \S$, define the string $q(S_i)$ as any string that contains each character of  
$\{\e{u} : u \in S_i\}$ exactly once.  We put 

$$
G = \s{S_1}q(S_1)\s{S_2}q(S_2) \ldots \s{S_n}q(S_n),
$$
i.e. $G$ is the concatenation of the strings $\s{S_i}q(S_i)$.
As for the CNP $\c$, put 

\begin{itemize}
    \item 

$\c(\s{S_i}) = 1$ for each $S_i \in \S$;

    \item 
$\c(\e{u}) = f(u) -1$ for each $u \in U$, where $f(u) = |\{S_i \in \S : u \in S_i\}|$ is the number of sets from $\S$ that contain $u$.
\end{itemize}

Notice that in $G$, each $\s{S}$ already has the correct copy-number, whereas each $\e{u}$ needs exactly one less copy.  Our goal is thus to reduce the number of each $\e{u}$ by $1$.
This concludes the construction of~\mcng~instances from SET-COVER instances.  We know focus on the hardness of the deletions-only case.

\subsection{Warmup: the deletions-only case}

Suppose that we are given a set cover instance $\S$ and $U$, and let $G$ and $\c$ be the genome and CNP, respectively, as constructed above.

\begin{lemma}\label{lem:exact-to-dels}
Given an exact cover $\S^*$ for $\S$ of cardinality $k$, one can obtain a sequence of $k$ deletions transforming $G$ into a genome with CNP $\c$.
\end{lemma}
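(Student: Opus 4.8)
The plan is to use each set $S_i$ chosen in the exact cover $\S^*$ to delete, in the block $\s{S_i}q(S_i)$ of $G$, exactly one copy of the character $\e{u}$ for every element $u \in S_i$. Concretely, fix an ordering of $\S^* = \{S_{i_1}, \ldots, S_{i_k}\}$ (the order among them does not matter). For each chosen set $S_{i_j}$, the substring $q(S_{i_j})$ contains exactly one occurrence of each $\e{u}$ with $u \in S_{i_j}$, and these occurrences are consecutive positions in $G$ (they form the contiguous block $q(S_{i_j})$, sandwiched between the separator $\s{S_{i_j}}$ and the next separator $\s{S_{i_j+1}}$). So I would apply a single deletion that removes precisely this block $q(S_{i_j})$, i.e. the deletion $(p+1, p+|S_{i_j}|)$ where $p$ is the position of $\s{S_{i_j}}$. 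This gives one deletion per set of $\S^*$, for a total of $k$ deletions.

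Next I would verify that the resulting genome $T$ has $\cnpg{T} = \c$. The separators are never touched, so each $\s{S_i}$ keeps its single copy, matching $\c(\s{S_i}) = 1$. For an element $u \in U$: originally $\e{u}$ occurs exactly $f(u)$ times in $G$ (once in each block $q(S_i)$ with $u \in S_i$). Since $\S^*$ is a cover, $u$ lies in at least one chosen set, so at least one occurrence of $\e{u}$ is deleted; since $\S^*$ is an \emph{exact} cover, $u$ lies in exactly one chosen set, so exactly one occurrence of $\e{u}$ is deleted. Hence $\e{u}$ occurs $f(u) - 1$ times in $T$, which is precisely $\c(\e{u})$. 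Therefore $\cnpg{T} = \c$, and $T$ was obtained from $G$ by $k$ deletions.

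The only subtlety — and really the only thing that needs care rather than routine bookkeeping — is the interaction between successive deletions and the shifting of positions. Since the blocks $q(S_{i_1}), \ldots, q(S_{i_k})$ are pairwise disjoint substrings of $G$, deleting one of them does not destroy or merge the others; it only shifts their positions. So after the first $j-1$ deletions, the block originally equal to $q(S_{i_j})$ still appears intact in the current genome (possibly at a shifted location), and the $j$-th deletion can be taken to be the pair of its current endpoints. This is exactly why the exactness of the cover is used both to guarantee the blocks are disjoint characters-wise and to guarantee the count drops by exactly one; disjointness as substrings is automatic from the concatenated structure of $G$. Once this is noted, the proof is complete.
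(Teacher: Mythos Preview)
Your proof is correct and follows essentially the same approach as the paper: delete the substring $q(S_{i_j})$ for each $S_{i_j}\in\S^*$, and use exactness of the cover to conclude that exactly one occurrence of each $\e{u}$ is removed while no separator is touched. Your added remark about position shifts is a welcome bit of care that the paper omits, but it does not change the argument.
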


\begin{proof}
Denote $\S^* = \{S_{i_1}, \ldots, S_{i_k}\}$.
Consider the sequence of $k$ deletions that deletes the substrings $q(S_{i_1}), \ldots, q(S_{i_k})$ (i.e. the sequence first deletes the substring 
$q(S_{i_1})$, then deletes $q(S_{i_2})$, and so on until $q(S_{i_k})$ is deleted).  Since $S_{i_1}, \ldots, S_{i_k}$ is an exact cover, 
this sequence removes exactly one copy of each $\e{u} \in \Sigma_U$ and does not affect the $\s{S}$ characters.
Thus the $k$ deletions transform $G$ into a genome with the desired CNP $\c$.
\end{proof}

\begin{lemma}\label{lem:events-to-cover}
Given a sequence of $k$ deletions transforming $G$ into a genome with CNP $\c$, one can obtain a set cover for $\S$ of cardinality at most $k$.
\end{lemma}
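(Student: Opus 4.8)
The goal is to take an arbitrary sequence $D = (e_1, \ldots, e_k)$ of $k$ deletions with $\cnpg{G\gevent{D}} = \c$ and extract from it a set cover of $\S$ using at most $k$ sets. The natural candidate is: let $\S^*$ be the collection of all $S_i$ such that \emph{some} deletion in $D$ removes at least one character lying inside the block $q(S_i)$ of $G$. Since there are only $k$ deletions, $\S^*$ clearly has cardinality at most $k$ (each deletion ``touches'' at least one block, and we only add one set per touched block, so the count is bounded by the number of deletions). The two things to verify are (i) that $\S^*$ is genuinely a set cover of $U$, and (ii) that the ``at most $k$'' bound survives — the subtlety being that a single deletion could in principle span several consecutive blocks at once.

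\textbf{Why $\S^*$ covers $U$.} Fix $u \in U$. In $G$ there are exactly $f(u)$ occurrences of $\e{u}$, one inside each block $q(S_i)$ with $u \in S_i$, and the target requires $\c(\e{u}) = f(u) - 1$, so at least one occurrence of $\e{u}$ must be removed by $D$. Whichever deletion removes that occurrence removes a character sitting inside some block $q(S_j)$ with $u \in S_j$; hence $S_j \in \S^*$ and $u \in \bigcup_{S \in \S^*} S$. This handles (i) directly.

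\textbf{Handling deletions that straddle many blocks.} The only real obstacle is a deletion $e_\ell$ whose removed substring contains characters from two or more different blocks $q(S_{j})$. Such a deletion contributes several sets to $\S^*$ but is counted only once in $k$, which threatens the cardinality bound. The key observation is that the separator characters $\s{S_i}$ act as barriers: since every $\s{S_i}$ has $\c(\s{S_i}) = 1$, which already matches its count in $G$, \emph{no} deletion in $D$ may remove any $\s{S}$ character — otherwise that character's copy-number would drop below its target and could never be restored by further deletions. But any contiguous substring of $G$ that meets two distinct blocks $q(S_i)$ and $q(S_j)$ with $i < j$ must contain the separator $\s{S_{j}}$ (and every separator strictly between them). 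Therefore no deletion can straddle more than one block, so the map ``deletion $\mapsto$ the unique block it touches'' is well-defined, each of the $k$ deletions touches exactly one block, and $|\S^*| \le k$ follows.

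\textbf{What I expect to be routine vs.\ hard.} The covering argument is immediate. The genuinely load-bearing step is the barrier argument showing no deletion removes a separator and hence no deletion straddles blocks; once that is in place the cardinality bound is a one-line counting argument. A minor point to state carefully is that deletions are applied sequentially, so ``touches block $q(S_i)$'' should be phrased in terms of the original positions of $G$ (or equivalently, one tracks which original characters of $G$ are ever deleted); since deletions only remove characters and never create new ones, the set of original $\e{u}$-occurrences that disappear is well-defined regardless of order, which is all the argument needs. Combining Lemma~\ref{lem:exact-to-dels} and this lemma, the reduction $g$ sending $\S$ to $(G, \c)$ satisfies both hypotheses of Lemma~\ref{lem:general-inapprox}, giving the inapproximability of $\distgcnp$ in the deletions-only regime.
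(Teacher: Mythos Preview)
Your proposal is correct and follows essentially the same approach as the paper: both argue that separators $\s{S_i}$ cannot be deleted (since their copy-number already matches $\c$), hence every deletion is confined to a single block $q(S_i)$, and the collection of ``touched'' $S_i$ forms a set cover of size at most $k$. Your treatment is slightly more careful about the sequential nature of the deletions and tracking original positions, which the paper leaves implicit.
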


\begin{proof}
Suppose that the deletion events $E = (e_1, \ldots, e_k)$  transform $G$ into a genome 
$G^*$ with CNP $\c$.  
Note that no $e_i$ deletion is allowed to delete a set-character $\s{S_i} \in \Sigma_{\S}$, as there is only one occurrence of $\s{S_i}$ in $G$ and $\c(\s{S_i}) = 1$.
Thus all deletions remove only $\e{u}$ characters.  In other words, each $e_j$ in $E$ either deletes a substring of $G$ between some $\s{S_i}$ and $\s{S_{i+1}}$ with $1 \leq i < n$, or $e_j$ deletes a substring after $\s{S_n}$.  Moreover, exactly one of each $\e{u}$ occurrences gets deleted from $G$.

Call $\s{S_i} \in \Sigma_{\S}$ \emph{affected} if there is some event of $E$ that deletes at least one character between $\s{S_i}$ and $\s{S_{i+1}}$ with $1 \leq i < n$, and call $\s{S_n}$ affected if some event of $E$ deletes characters after $\s{S_n}$.
Let {$\S^* := \{S_i \in \S : \s{S_i}$~is~affected$\}$}.
Then $|\S^*| \leq k$, since each deletion affects at most one $\s{S_i}$ and there are $k$ deletion events.
Moreover, $\S^*$ must be a set cover, because each $\e{u} \in \Sigma_U$ has at least one occurrence that gets deleted and thus at least one set containing $u$ that is included in $\S^*$.  This concludes the proof.
\end{proof}

We have shown that all the assumptions required by Lemma~\ref{lem:general-inapprox} are satisfied.  The inapproximability follows.

\begin{theorem}\label{thm:hard-dels-only}
Assuming $P \neq NP$, there is no polynomial-time constant factor approximation algorithm for~\mcng~when only deletions are allowed.
\end{theorem}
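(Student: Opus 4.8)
The plan is to apply Lemma~\ref{lem:general-inapprox}, taking $\mathcal{B}$ to be the deletions-only variant of~\mcng{} and $g$ to be the reduction built in Section~\ref{subsec:construction}, which sends a SET-COVER instance $(\S, U)$ to the pair $(G, \c)$. Everything needed is already in hand, so the proof is essentially a verification that the hypotheses of Lemma~\ref{lem:general-inapprox} hold.

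First I would observe that $g$ is computable in polynomial time: $G = \s{S_1}q(S_1)\cdots\s{S_n}q(S_n)$ has length $O(n + \sum_i |S_i|)$, and $\c$ is obtained by the trivial counting $\c(\s{S_i}) = 1$ and $\c(\e{u}) = f(u) - 1$. I would also note, to be safe, that on instances in the image of $g$ the target $\c$ is reachable from $G$ using deletions only, since $\c(\s{S_i})$ equals the number of occurrences of $\s{S_i}$ in $G$ and $\c(\e{u}) = f(u) - 1 \le f(u)$ equals the number of occurrences of $\e{u}$ in $G$; hence no character ever needs to be created, and the deletions-only problem is well posed on these instances.

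Next I would match the two bullets of Lemma~\ref{lem:general-inapprox} to the warm-up lemmas. The first bullet --- an exact cover of $\S$ of size at most $k$ yields a solution of value at most $k$ for $g(\S)$ --- is exactly Lemma~\ref{lem:exact-to-dels}: deleting the substrings $q(S_{i_1}), \ldots, q(S_{i_k})$ corresponding to an exact cover removes one copy of each $\e{u}$ and leaves the separator characters untouched. The second bullet --- a solution of value at most $k$ for $g(\S)$ yields a set cover of $\S$ of size at most $k$ --- is exactly Lemma~\ref{lem:events-to-cover}: collect the sets whose separator region is ``affected'' by some deletion. With both hypotheses confirmed, Lemma~\ref{lem:general-inapprox} immediately gives that, unless $P = NP$, there is no constant factor approximation for the deletions-only~\mcng{} problem.

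There is no real obstacle left at this stage: all of the combinatorial content lives in Lemmas~\ref{lem:exact-to-dels} and~\ref{lem:events-to-cover} together with the general reduction scheme of Lemma~\ref{lem:general-inapprox}, and the only subtlety is the bookkeeping point above about reachability by deletions, which is immediate from the definition of $\c$.
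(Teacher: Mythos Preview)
Your proposal is correct and follows essentially the same approach as the paper: the paper also derives Theorem~\ref{thm:hard-dels-only} directly from Lemma~\ref{lem:general-inapprox} by invoking Lemma~\ref{lem:exact-to-dels} for the first hypothesis and Lemma~\ref{lem:events-to-cover} for the second. Your additional remarks on the polynomial-time computability of $g$ and the reachability of $\c$ by deletions are not spelled out in the paper but are harmless and correct sanity checks.
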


We mention without proof that the reduction should be adaptable to the duplication-only case, by putting $\c(\e{u}) = f(u) + 1$ for each $u \in U$.

\subsubsection*{The real deal: deletions and duplications}

We now consider both deletions and duplications.  The reduction uses the same construction as in Section~\ref{subsec:construction}.  Thus we assume that we have a SET-COVER instance $\S$ over $U$, and a corresponding instance of~\mcng~with genome $G$ and CNP $\c$. 

In that case, we observe the following: Lemma~\ref{lem:exact-to-dels} still holds whether we allow deletion only, or both deletions and duplications.  Thus we only need to show that the second assumption of Lemma~\ref{lem:general-inapprox} holds.

Unfortunately, this is not as simple as in the deletions-only case.  The problem is that some duplications may copy some $\e{u}$ and $\s{S_i}$ occurrences, and we lose control over what gets deleted, and over what $\s{S_i}$ each $\e{u}$ corresponds to (in particular, some $\s{S_i}$ might now get deleted, which did not occur in the deletions-only case).

Nevertheless, the analogous result can be shown to hold.

\begin{lemma}\label{lem:events-to-cover-dup}
Given a sequence of $k$ events (deletions and duplications) transforming $G$ into a genome with CNP $\c$, one can obtain a set cover for $\S$ of cardinality at most $k$.
\end{lemma}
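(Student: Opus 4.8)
The plan is to reduce the deletions-and-duplications case back to an essentially deletions-only situation by showing that duplications are never truly helpful, and then reuse the counting argument from Lemma~\ref{lem:events-to-cover}. Concretely, suppose $E = (e_1, \ldots, e_k)$ is a sequence of events with $\cnpg{G\gevent{E}} = \c$. The key observation is that the target CNP $\c$ requires strictly fewer copies of every $\e{u}$ than $G$ has, and exactly the same number of every $\s{S_i}$. So the ``net effect'' on each character is non-positive, and for the separator characters it is exactly zero. I would first argue that any duplication in $E$ can be removed (or replaced by something no longer) without increasing the total count: if a duplication creates extra copies of some characters, those extra copies must eventually be deleted to meet $\c$, so intuitively the duplication together with the deletions that undo it can be short-circuited. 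The hard part is making this precise, because a single later deletion may remove a contiguous block containing both ``original'' characters and ``duplicated'' characters, so the bookkeeping of which deletion cancels which duplication is not clean.

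The cleanest way I see to handle this is to \emph{not} try to normalize $E$ at all, but instead to directly extract a set cover of size at most $k$ from $E$ by a charging argument on the separators, generalizing the ``affected'' definition from the deletions-only proof. Since $\c(\e{u}) = f(u) - 1 \geq 0$, every element $u \in U$ has at least one occurrence of $\e{u}$ in $G$ that does not survive to $G\gevent{E}$ — more carefully, among the $f(u)$ original occurrences of $\e{u}$ in $G$, at least one must be deleted at some point (even if copies of it were made first, the multiset of $\e{u}$'s must shrink by at least one net, and copies only add to the surplus that also must be deleted, so in particular some original occurrence is deleted). I would track original occurrences through the sequence of events: each character of $G$ is ``original'', each duplicated character is ``new'', and when a deletion removes a substring I look at which original characters (from $G$) it destroys. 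The plan is: for each $u$, pick one original occurrence of $\e{u}$ that gets deleted by some event $e_j$; that occurrence sat, in $G$, inside some block $q(S_i)$, so it ``belongs'' to $S_i$; put $S_i$ into $\S^*$ and charge it to $e_j$.

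The main obstacle — and the reason this lemma needs its own proof rather than a one-line remark — is bounding $|\S^*|$ by $k$. In the deletions-only case each deletion event affects at most one separator-interval, hence contributes at most one set. With duplications in play, a single deletion event $e_j$ could in principle destroy original characters that, in $G$, came from several different blocks $q(S_i)$ — for instance if an earlier duplication rearranged things so that $\e{u}$ from $q(S_1)$ and $\e{v}$ from $q(S_2)$ now sit adjacent. So I need a structural lemma: at the moment a deletion $e_j$ is applied, the original characters it removes that lie \emph{strictly between two original separators $\s{S_i}, \s{S_{i+1}}$} — or the original characters corresponding to separators themselves — can only come from a single block. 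This should follow from the fact that no $\s{S_i}$ can ever be net-deleted (since $\c(\s{S_i}) = 1$ and there is exactly one in $G$), combined with an argument that duplications cannot move an original $\e{u}$ across an original separator without leaving a telltale copy that violates the separator count. If that structural claim fails to be literally true, the fallback is a slightly weaker bound — show each deletion contributes at most one \emph{newly covered} element's set, or re-charge via a potential function counting $\sum_u (\text{surplus of } \e{u})$ — but I expect the clean version to go through, giving $|\S^*| \leq k$ and hence, since $\S^*$ covers $U$ by construction, the desired set cover.
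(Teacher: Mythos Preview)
Your overall strategy---pick, for each $u$, one original occurrence of $\e{u}$ that disappears, read off its block $S_i$, and charge $S_i$ to the event that deleted it---is sound up through the construction of $\S^*$ and the verification that $\S^*$ covers $U$. The gap is precisely where you flag it: the bound $|\S^*| \leq k$. The structural lemma you propose (a single deletion can only destroy original $\e{u}$'s from one block, because original separators act as barriers) is false, and your stated justification (``no $\s{S_i}$ can ever be net-deleted'') is too weak to yield it. Here is a two-event counterexample on $G = \s{S_1}\e{1}\s{S_2}\e{2}$ with $\c(\e{1}) = \c(\e{2}) = 0$. First duplicate $G[1..3] = \s{S_1}\e{1}\s{S_2}$ and insert the copy at the end, obtaining $\s{S_1}\,\e{1}\,\s{S_2}\,\e{2}\,\s{S_1}'\,\e{1}'\,\s{S_2}'$ (primes mark copies). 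Then delete positions $2$ through $6$, leaving $\s{S_1}\,\s{S_2}'$, which has the correct CNP. This single deletion removes the \emph{original} $\e{1}$ (from block $q(S_1)$) and the \emph{original} $\e{2}$ (from block $q(S_2)$); it also removes the original $\s{S_2}$, which is fine because a copy survives. So one deletion is charged by two distinct blocks, and your one-set-per-event charging collapses. The fallback you sketch (potential function on surpluses) is not developed enough to see whether it rescues the bound.

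The paper's proof avoids charging altogether. It shows first that for each $u$ there is an \emph{unimportant} position $p_u$ carrying $\e{u}$ (no descendant of $p_u$ survives), and takes $\S^* = \{ S : \s{S} = pred(p_u) \text{ for some } u \}$. To bound $|\S^*|$, it extracts from $G$ the subsequence $\tilde{G}$ consisting only of the chosen unimportant positions and their predecessor separators, and defines a target CNP $\tilde{\c}$ that keeps each such separator once and zeroes everything else. Two inequalities finish the job: $\distgcnp(\tilde{G},\tilde{\c}) \leq k$ follows from a general ``delete a character everywhere'' monotonicity (Proposition~\ref{prop:replace-all-by-dont-care}), and $\distgcnp(\tilde{G},\tilde{\c}) \geq |\S^*|$ follows from a separate inductive lemma (Lemma~\ref{lem:alternate-genomes}) showing that any genome of the alternating form $x_1Y_1x_2Y_2\cdots x_rY_r$, with each $x_i$ to be kept exactly once and each $Y_i$ to be emptied, needs at least $r$ events---and that lemma is where the real work happens, with a careful case analysis of whether the first event is a deletion or a duplication. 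The key conceptual move you are missing is this detour through $\tilde{G}$: rather than tracking which event kills which original, the paper reduces to a clean alternating instance whose cost is provably at least $|\S^*|$ and at most $k$.
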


Due to space constraints, we redirect the reader to the Appendix for the proof.  In a nutshell, given a sequence of events from $G$ to a genome with CNP $\c$, the idea is to find, for each $u \in U$, one occurrence of $\e{u}$ in $G$ that we have control over.  More precisely, even though that occurrence of $\e{u}$ might spawn duplicates, all its copies (and copies of copies, and so on) will eventually get deleted.  The $\s{S_i}$ character preceding this $\e{u}$ character indicates that $S_i$ can be added to a set cover.  The crux of the proof is to show that this $\e{u}$ character exists for each $u \in U$, and that their corresponding $\s{S_i}$ form a set cover of size at most $k$.

We arrive to our main inapproximability result, which again follows from Lemma~\ref{lem:general-inapprox}.

\begin{theorem}
Assuming P $\neq$ NP, there is no polynomial-time constant factor approximation algorithm for~\mcng.
\end{theorem}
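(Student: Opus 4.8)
The plan is to apply Lemma~\ref{lem:general-inapprox} with $\mathcal{B} = \mcng$ and with $g$ being the reduction described in Section~\ref{subsec:construction}, which maps a SET-COVER instance $\S$ over a universe $U$ to the \mcng~instance $(G, \c)$. It therefore suffices to check that this $g$ satisfies the two hypotheses of Lemma~\ref{lem:general-inapprox}, and the theorem then follows mechanically.

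For the first hypothesis, I would simply quote Lemma~\ref{lem:exact-to-dels}: an exact cover $\S^*$ of cardinality $k$ is turned into a sequence of $k$ deletions transforming $G$ into a genome with CNP $\c$. Since $\distgcnp$ counts \emph{any} events and a sequence of $k$ deletions is in particular a sequence of $k$ events, this already gives a solution of value at most $k$ for $g(\S)$; allowing duplications in addition can only lower the optimum, so the first bullet holds verbatim. For the second hypothesis, I would invoke Lemma~\ref{lem:events-to-cover-dup}, which states exactly that a sequence of $k$ events (deletions \emph{and} duplications) transforming $G$ into a genome with CNP $\c$ can be converted into a set cover of $\S$ of size at most $k$. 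With both bullets verified, Lemma~\ref{lem:general-inapprox} immediately yields that, unless $\mathrm{P}=\mathrm{NP}$, \mcng~has no constant-factor approximation.

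The real work — and the step I expect to be the main obstacle — is Lemma~\ref{lem:events-to-cover-dup} itself. In the deletions-only case each deleted character sits unambiguously inside one block $\s{S_i}q(S_i)$, giving a clean map from deletions to ``affected'' sets. Once duplications are allowed this breaks down: a duplication may copy $\e{u}$-characters and even $\s{S_i}$-characters, so an $\e{u}$-occurrence removed late in the process may be a copy of a copy with no obvious home block, and a separator $\s{S_i}$ may itself be deleted and re-created. To recover the argument I would, for each $u \in U$, exhibit one occurrence of $\e{u}$ already present in the original $G$ whose entire family of descendants under the event sequence is eventually deleted; a counting argument comparing $n_{\e{u}}(G) = f(u)$ with the target $\c(\e{u}) = f(u) - 1$ should force at least one such occurrence to exist. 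The $\s{S_i}$ immediately preceding that occurrence in $G$ is then added to the cover $\S^*$; one shows $\S^*$ is a cover because every $u$ contributes such a set, and that $|\S^*| \le k$ by charging each chosen set to a distinct event among the $k$ given events. Making the notion of ``family of descendants'' and the charging scheme precise is the crux, which is why the full proof is deferred to the appendix.
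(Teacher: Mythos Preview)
Your proposal is correct and follows essentially the same approach as the paper: the theorem is deduced from Lemma~\ref{lem:general-inapprox} by using Lemma~\ref{lem:exact-to-dels} for the first bullet (noting that a deletion-only sequence is in particular an event sequence) and Lemma~\ref{lem:events-to-cover-dup} for the second bullet. Your sketch of how Lemma~\ref{lem:events-to-cover-dup} is proved also matches the paper's outline---finding, for each $u\in U$, an original occurrence of $\e{u}$ in $G$ whose descendants are all eventually deleted, taking the preceding $\s{S_i}$ as the chosen set, and then bounding the number of such $\s{S_i}$ by $k$.
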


In the next section, we prove that the MCNG problem, parameterized by the
solution size, is W[1]-hard. This answers another open question
in \cite{qingge2018}. We refer readers for more details on FPT and
W[1]-hardness to the book by Downey and Fellows \cite{downey2012parameterized}.

\section{W[1]-hardness for MCNG}

Since SET-COVER is W[2]-hard, naturally we would like to use the ideas from the above reduction to prove the W[2]-hardness of~\mcng.  However, the fact that we use $t$-SET-COVER with constant $t$ in the proof of Lemma~\ref{lem:general-inapprox} is crucial, and $t$-SET-COVER is in FPT.  On the other hand, the property that is really needed in the instance of this proof, and in out~\mcng~ reduction, is that we can transform any set cover instance into an exact cover.  We capture this intuition in the following, and show that SET-COVER instances that have this property are W[1]-hard to solve. 

An instance of SET-COVER-with-EXACT-COVER, or SET-COVER-EC for short, is a pair $I = (\S, k)$ where $k$ is an integer and 
$\S$ is a collection of sets forming a universe $U$.
In this problem, we require that $\S$ satisfies the property that \emph{any} set cover for $\S$ of size at most $k$ is also an exact cover.
We are asked whether there exists a set cover for $\S$ of size at most $k$ (in which case this set cover is also an exact cover).
Therefore, SET-COVER-EC is a promise problem.

\begin{lemma}
The SET-COVER-EC problem is W[1]-hard for parameter $k$.
\end{lemma}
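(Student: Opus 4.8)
The plan is to reduce from a known W[1]-hard problem whose instances naturally give rise to set systems in which small set covers are forced to be exact; the most natural candidate is \textsc{Independent Set} (or, essentially equivalently, \textsc{Multicolored Clique} / \textsc{Multicolored Independent Set}), parameterized by the solution size. The intuition is the following: in a $k$-clique or $k$-independent-set instance we want to ``hit'' a universe of constraints using exactly $k$ cleverly chosen objects, and if the constraints are set up so that any two ``conflicting'' objects would cover a common element, then a set cover of size $k$ cannot contain two conflicting objects and hence must be an exact cover.

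Concretely, I would proceed as follows. First, take an instance $(H, k)$ of \textsc{Multicolored Clique}, where $H$ has vertex classes $V_1, \ldots, V_k$. Build a set system $\S$ whose ground set $U$ has: (i) one ``class element'' $x_i$ for each $i \in \{1, \ldots, k\}$, which must be covered, forcing at least one chosen set per class; and (ii) one ``non-edge element'' $y_{uv}$ for every pair $u \in V_i$, $v \in V_j$ with $i \neq j$ and $uv \notin E(H)$. For each vertex $v \in V_i$, create a set $S_v$ containing $x_i$ together with all non-edge elements $y_{uv}$ incident to $v$; one checks that choosing one vertex per class forms a set cover of $U$ of size $k$ if and only if the chosen vertices are pairwise adjacent, i.e. form a multicolored clique. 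Second, and this is the key point for the promise, I must argue that \emph{every} set cover of size at most $k$ is forced to be exact: the $k$ class elements $x_1, \ldots, x_k$ force at least one set per class, so a cover of size $\le k$ picks exactly one $S_v$ per class; any two such sets $S_u, S_v$ from different classes with $uv \notin E(H)$ share the element $y_{uv}$, but if the $k$ chosen sets truly cover $U$ then no such $y_{uv}$ element exists among the chosen pairs, so the chosen sets are pairwise disjoint — an exact cover. Third, the reduction runs in polynomial time and the parameter is preserved (the new parameter is $k$), and yes-instances of \textsc{Multicolored Clique} correspond exactly to yes-instances of SET-COVER-EC; since \textsc{Multicolored Clique} is W[1]-hard, so is SET-COVER-EC.

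The main obstacle I anticipate is establishing the promise condition cleanly rather than the equivalence itself: I need to be careful that a size-$\le k$ set cover cannot ``cheat'' by using fewer than $k$ sets or by using two sets from the same class to compensate elsewhere. Handling this requires the class elements $x_i$ to be ``private'' to class $i$ (appearing only in sets $S_v$ with $v \in V_i$), which rules out covering two classes with one set and forces exactly one set per class in any size-$\le k$ cover. One subtlety is whether the standard W[1]-hard source should be \textsc{Multicolored Clique} or \textsc{Multicolored Independent Set}: the non-edge gadget above is tailored to clique, and I would double-check that the resulting set-cover instance size is polynomial (it is: $|U| = O(|V(H)|^2)$ and $|\S| = |V(H)|$). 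A final small check is to confirm the promise is vacuously fine on no-instances — there simply is no set cover of size $\le k$ at all, so the ``every cover of size $\le k$ is exact'' requirement holds trivially — which is exactly what the promise-problem formulation needs.
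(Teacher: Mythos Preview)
There is a genuine gap in your forward direction. In your construction, a non-edge element $y_{uw}$ appears only in the two sets $S_u$ and $S_w$. Hence $\{S_{v_1}, \ldots, S_{v_k}\}$ covers $y_{uw}$ if and only if $u$ or $w$ is among the chosen vertices. But a multicolored clique $\{v_1, \ldots, v_k\}$ does not guarantee this for arbitrary non-edges: take $V_1 = \{a, b\}$, $V_2 = \{c, d\}$ with $ac$ the only inter-class edge; then $\{a, c\}$ is a multicolored clique, yet the non-edge element $y_{bd}$ lies in neither $S_a$ nor $S_c$. So a yes-instance of \textsc{Multicolored Clique} need not map to a yes-instance of your set-cover problem, and the equivalence fails. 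The same obstruction persists for larger $k$: any non-edge between two \emph{unchosen} vertices is left uncovered, and being a clique says nothing about those.

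The paper also reduces from \textsc{Multicolored Clique}, but with a substantially richer gadget designed to get around exactly this issue. Its universe contains, besides one element per color and one per color pair, two \emph{directed} elements $(u,v)$ and $(v,u)$ for every edge $uv \in E(H)$. Crucially, there are sets not only for vertices but also for edges: each vertex set $S_u$ covers the color of $u$ together with all pairs $(u,\cdot)$, while each edge set $S_{uv}$ (for $uv \in E_{ij}$) covers the color-pair element $\{i,j\}$ together with all directed elements of $U_{ij}$ \emph{except} those whose first coordinate is $u$ or $v$. The target parameter becomes $k' = k + \binom{k}{2}$. A clique $\{v_1, \ldots, v_k\}$ then yields the cover $\{S_{v_i}\} \cup \{S_{v_iv_j}\}$: each $S_{v_iv_j}$ blankets almost all of $U_{ij}$, and the few holes it leaves --- the pairs $(v_i,\cdot)$ and $(v_j,\cdot)$ --- are filled precisely by $S_{v_i}$ and $S_{v_j}$. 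The exactness and converse arguments then show that in any size-$k'$ cover the chosen vertex sets and edge sets must be mutually consistent, forcing both a clique and pairwise disjointness. Your vertex-only gadget cannot achieve this cover-everything-while-remaining-disjoint balance; the edge sets are the missing ingredient.
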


\begin{proof}
We show W[1]-hardness using the techniques introduced by Fellows et al. which is coined as MULTICOLORED-CLIQUE \cite{fellows2009}.
In the MULTICOLORED-CLIQUE problem, we are given a graph $G$, an integer $k$ and a coloring $c:V(G) \rightarrow [k]$ such that no two vertices of the same color share an edge.  We are asked whether $G$ contains a clique of $k$ vertices (noting that such a clique must have a vertex of each color).  This problem is W[1]-hard w.r.t. $k$.

Given an instance $(G, k, c)$ of MULTICOLORED-CLIQUE, we construct an instance $I = (\S, k')$ of SET-COVER-EC.  We put $k' = k + {k \choose 2}$.
For $i \in [k]$, let $V_i = \{v \in V(G) : c(v) = i\}$ and for each pair $i < j \in [k]$, 
let $E_{ij} = \{uv \in E(G) : u \in V_i, v \in V_j\}$.
The universe $U$ of the SET-COVER-EC instance has one element for each color $i$, one element for each pair $\{i,j\}$ of distinct colors, and two elements for each edge, one for each direction of the edge.  That is, 

$$U = [k] \cup {[k] \choose 2} \cup \{(u,v) \in V(G) \times V(G) : uv \in E(G)\}$$

Thus $|U| = k + {k \choose 2} + 2|E(G)|$.
For two colors $i < j \in [k]$, we will denote $U_{ij} = \{(u,v), (v,u) : u \in V_i, v \in V_j, uv \in E_{ij} \}$, i.e. we include in $U_{ij}$ both elements corresponding to each $uv \in E_{ij}$.
Now, for each color class $i \in [k]$ and each vertex $u \in V_i$, add to $\S$ the set

$$S_u = \{i\} \cup \{(u, v) : v \in N(u)\}$$

where $N(u)$ is the set of neighbors of $u$ in $G$.
Then for each $i < j \in [k]$, and for each edge $uv \in E_{ij}$, add to $\S$ the set


$$
S_{uv} = \{\{i,j\}\} \cup \{(x,y) \in U_{ij} : x \notin \{u,v\}\}
$$

\begin{figure}

\centering
\includegraphics[scale=0.8]{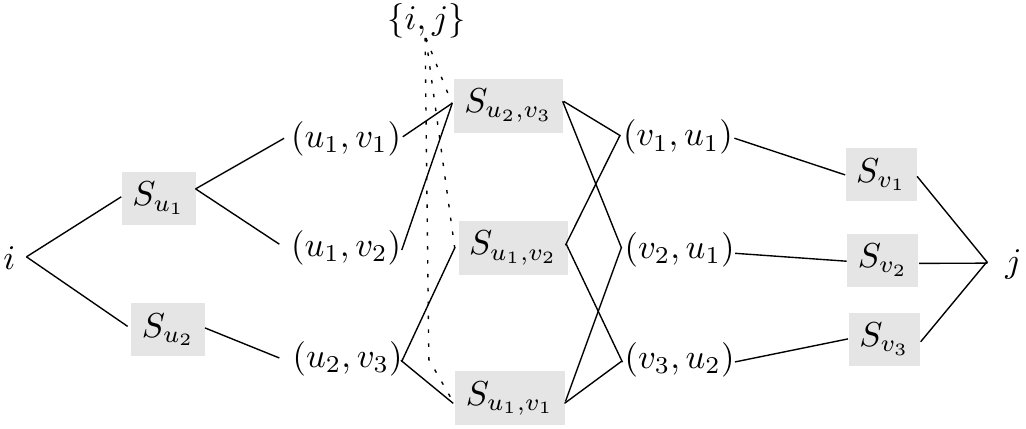}
\caption{A graphical example of the constructed sets for the $U_{ij}$ elements of a graph (not shown) with $E_{ij} = \{u_1v_1, u_1v_2, u_2v_3\}$, where the $u_l$'s are in $V_i$ and the $v_l$'s in $V_j$ (sets have a gray background, edges represent containment, the $\{i,j\}$ lines are dotted only for better visualization).}\label{fig:reduction_w1}

\end{figure}

The idea is that $S_{uv}$ can cover every element of $U_{ij}$, except those ordered pairs whose first element is $u$ or $v$.  Then if we do decide to include $S_{uv}$ in a set cover, it turns out that we will need to include $S_u$ and $S_v$ to cover these missing ordered pairs. See Figure~\ref{fig:reduction_w1} for an example.  For instance if we include $S_{u_2, v_3}$ in a cover, the uncovered $(u_2, v_3)$ and $(v_3, u_2)$ can be covered with $S_{u_2}$ and $S_{v_3}$.
We show that $G$ has a multicolored clique of size $k$ if and only if $\S$ admits a set cover of size $k'$.
Note that we have not shown yet that $(\S, k')$ is an instance of SET-COVER-EC, i.e. that any set cover of size at most $k'$ is also an exact cover.  This will be a later part of the proof.

First suppose that $G$ has a multi-colored clique $C = \{v_1, \ldots, v_k\}$, where $v_i \in V_i$ for each $i \in [k]$.  Consider the collection 
$$
\S^* = \{S_{v_1}, \ldots, S_{v_k}\} \cup \{S_{v_iv_j} : v_i, v_j \in C, 1 \leq i < j \leq k\},
$$ 
the cardinality of $\S^*$ is $k + {k \choose 2} = k'$.  Each element $i \in U \cap [k]$ is covered since we include a set $S_{v_i}$ for each color. 
Each element $\{i, j\} \in U \cap {[k] \choose 2}$ is covered since we include a set $S_{v_iv_j}$ for each color pair $i, j$ with $i < j$.
Consider an element $(x_i, y_j) \in U \cap (V(G) \times V(G))$, where $x_i \in V_i$ and $y_i \in V_j$.  Note that either $i < j$ or $j < i$ is possible, and that 
$v_iv_j \in E(G)$.
If $x_i \notin \{v_i, v_j\}$, then $S_{v_iv_j}$ covers $(x_i, y_j)$.  
If $x_i = v_i$, then $S_{v_i}$ covers $(x_i, v_j)$ and if $x_i = v_j$, then 
$S_{v_j}$ covers $(x_i, v_j)$.  Thus $\S^*$ is a set cover, and is of size at most $k'$.

For the converse direction, suppose that $\S^*$ is a set cover for $\S$ of size at most $k' = k + {k \choose 2}$.
Note that to cover the elements of $U \cap [k]$, $\S^*$ must have at least one set $S_u$ such that $u \in V_i$ for each color class $i \in [k]$.  Moreover, to cover the elements of $U \cap {[k] \choose 2}$, $\S^*$ must have at least one set $S_{uv}$ such that $u \in V_i, v \in V_j$ for each $i, j \in [k]$ pair.
We deduce that $\S^*$ has exactly $k + {k \choose 2}$ sets.  Hence for color $i \in [k]$, there is \emph{exactly} one set $S_u$ in $\S^*$ for which $u \in V_i$, and for each $\{i, j\}$ pair, there is \emph{exactly} one $S_{uv}$ set 
in $\S^*$ for which $u \in V_i, v \in V_j$.  

We claim that $C = \{u : S_u \in \S^*\}$ is a multi-colored clique.  We already know that $C$ contains one vertex of each color.  Now, suppose that some $u, v \in C$ do not share an edge, where $u \in V_i, v \in V_j$ and $i < j$.  Let $S_{xy}$ be the set of $\S^*$ that covers $\{i,j\}$, with $x \in V_i, y \in V_j$.  Since $uv$ is not an edge but $xy$ is, we know that $u \neq x$ or $v \neq y$ (or both).
Moreover, $S_{xy}$ does not cover the $(x, y)$ and $(y, x)$ elements of $U_{ij}$, and we know that at least one of these is not covered by $S_u$ nor $S_v$ (if $u \neq x$, then none covers $(x,y)$, if $v \neq y$, then none covers $(y,x)$).  But $(x, y) \in U_{ij}$, and $S_u, S_v$ and $S_{xy}$ are the only sets of $\S^*$ that have elements of $U_{ij}$, contradicting that $\S^*$ is a set cover.   This shows that $C$ is a multi-colored clique.

It remains to show that $\S^*$ is an exact cover.  
Observe that no two distinct $S_u$ and $S_v$ sets in $\S^*$ can intersect because $u$ and $v$ must be of a different color, and no two distinct $S_{uv}$ and $S_{xy}$ sets in $\S^*$ can intersect because $\{u, v\}$ and $\{x, y\}$ must be from two different color pairs. 
Suppose that $S_u, S_{xy} \in \S^*$ do intersect, and say that $x \in V_i, y \in V_j$ and $i < j$.  Then all elements in $S_u \cap S_{xy}$ are of the form $(u, v)$ for some $v$.  Choose any such $(u,v)$.  If $u$ is of color $i$, then $u \neq x$ since otherwise by construction $S_{xy}$ could not contain $(u,v)$.  But when $u \neq x$, no set of $\S^*$ covers the element $(x,y)$ (it is not $S_u$ nor $S_{xy}$, the only two possibilities).  If $u$ is of color $j$, then $u \neq y$ since again $S_{xy}$ could not contain $(u, v)$.  In this case, no set of $\S^*$ covers $(y,x)$.
We reach a contradiction and deduce that $\S^*$ is an exact cover.
\end{proof}

It is now almost immediate that~\mcng~is W[1]-hard with respect to the natural parameter, namely the number of events to transform a genome $G$ into a genome with a given profile $\c$ (proof in Appendix). 

\begin{theorem}\label{thm:w1-hard}
The~\mcng~problem is W[1]-hard.
\end{theorem}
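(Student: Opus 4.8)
The plan is to reduce from SET-COVER-EC, using the construction $g$ from Section~\ref{subsec:construction} that maps a set-cover instance $\S$ over $U$ to the \mcng~instance $(G,\c)$. By Lemma~\ref{lem:exact-to-dels}, any exact cover of $\S$ of cardinality $k$ yields $k$ deletions transforming $G$ into a genome with CNP $\c$; and by Lemma~\ref{lem:events-to-cover-dup}, any sequence of $k$ events transforming $G$ into a genome with CNP $\c$ yields a set cover of $\S$ of cardinality at most $k$. I would first argue that these two facts, together with the promise defining SET-COVER-EC, give the following equivalence for an instance $(\S,k)$ of SET-COVER-EC: there is a set cover of $\S$ of size at most $k$ (equivalently, an exact cover of size at most $k$) if and only if $\distgcnp(G,\c)\le k$. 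The forward direction uses the promise (the set cover is exact) plus Lemma~\ref{lem:exact-to-dels}; the backward direction uses Lemma~\ref{lem:events-to-cover-dup}. Crucially, we do not need the instance to have an exact cover — we only need that \emph{if} a small cover exists it is exact, which is exactly what the promise guarantees, so both directions go through.

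Next I would check that this is a valid parameterized reduction. The reduction $g$ runs in polynomial time (stated in the excerpt), and the output parameter is bounded by a function of the input parameter: indeed we take $\distgcnp(G,\c)\le k$ as the target, so the parameter is preserved exactly, $k\mapsto k$. Since SET-COVER-EC is W[1]-hard for parameter $k$ by the preceding Lemma, and parameterized reductions compose, it follows that \mcng~parameterized by the solution size $k$ is W[1]-hard. One small point to spell out: SET-COVER-EC is a promise problem, so strictly speaking the reduction maps YES-instances to YES-instances and NO-instances to NO-instances, with no requirement on instances violating the promise; the argument above establishes precisely this, because the promise is only invoked in the forward direction where we are already assuming a YES-instance.

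The main obstacle — and really the only subtlety — is making sure the promise of SET-COVER-EC is used correctly and is actually needed. Without the promise, Lemma~\ref{lem:exact-to-dels} only converts \emph{exact} covers into deletions, so a plain small set cover would not obviously give a short event sequence; and Lemma~\ref{lem:events-to-cover-dup} only produces a possibly-non-exact cover, so we cannot close the loop. The promise resolves this: on a YES-instance of SET-COVER-EC, a size-$\le k$ cover exists and is automatically exact, feeding Lemma~\ref{lem:exact-to-dels}; conversely if $\distgcnp(G,\c)\le k$, Lemma~\ref{lem:events-to-cover-dup} gives a cover of size $\le k$, so the instance is YES. Everything else is bookkeeping: confirming $g$ is polynomial and parameter-preserving, and noting that W[1]-hardness of a promise problem under such reductions transfers to \mcng. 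I expect the full proof in the appendix to be two short paragraphs mirroring exactly this outline.
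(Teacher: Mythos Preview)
Your proposal is correct and follows essentially the same approach as the paper: reduce from SET-COVER-EC via the construction of Section~\ref{subsec:construction}, use the promise plus Lemma~\ref{lem:exact-to-dels} for the forward direction, and the events-to-cover lemma for the converse. The paper's appendix proof is indeed two short lines mirroring your outline; the only minor difference is that the paper cites Lemma~\ref{lem:events-to-cover} (deletions-only) for the converse while you cite Lemma~\ref{lem:events-to-cover-dup} (deletions and duplications), which is arguably the more appropriate choice since \mcng~allows both operations.
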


Now that we have finished presenting the negative results on MCNG.
An immediate question is whether we could obtain some positive result
on a related problem. In the next section, we present some positive
result for an interesting variation of MCNG.

\section{The Copy Number Profile Conforming Problem}


We define the more general \emph{Copy Number Profile Conforming (CNPC)}
problem as follows: 
\begin{definition}
Given two CNP's $\c_1=\langle u_1,u_2,...,u_n\rangle$ and
$\v_2=\langle v_1,v_2,...,v_n\rangle$, 
the CNPC problem asks to compute two strings $S_1$ and $S_2$ with $cnp(S_1)=\c_1$ and $cnp(S_2)=\c_2$
such that the distance between $S_1$ and $S_2$, $d(S_1,S_2)$, is minimized.
\end{definition}

Let $\sum_i u_i = m_1, \sum_i v_i = m_2$, we assume that $m_1$ and $m_2$ are bounded
by a polynomial of $n$. (This assumption is needed as the solution of our algorithm
could be of size $\max\{m_1,n_2\}$.) We simply say $\c_1,\c_2$ are polynomially bounded. 
Note that $d(S_1,S_2)$ is a very general distance measure, i.e., it could be
any genome rearrangement distance (like reversal, transposition, and tandem
duplication, etc, or their combinations, e.g. tandem duplication + deletion).
In this paper, we use the breakpoint distance (and the adjacency number),
which is defined as follows. (These definitions are adapted from Angibaud
et al. \cite{angibaud2009} and Jiang et al. \cite{jiang2012}, which generalize
the corresponding concepts on permutations \cite{watterson1982}.)

Given two sequences $A$=$a_{1}a_{2}\cdots a_{n}$ and $B$=$b_{1}b_{2}\cdots b_{m}$,
if $\{a_{i},a_{i+1}\}$ = $\{b_{j},b_{j+1}\}$
we say that $a_{i}a_{i+1}$ and $b_{j}b_{j+1}$ are matched to each other.
In a maximum matching of 2-substrings in $A$ and $B$, a matched pair is called
an {\em adjacency}, and an unmatched pair is called a {\em breakpoint} in $A$ and $B$ respectively. Then, the number of breakpoints in $A$ (resp. $B$)
is denoted as $d_b(A,B)$ (resp. $d_b(B,A)$), and the number of (common)
adjacencies between $A$ and $B$ is denoted as $a(A,B)$. For example, if
$A=acbdcb, B=abcdabcd$, then $a(A,B)=3$ and there are 2 and 4 breakpoints
in $A$ and $B$ respectively. 

Coming back to our problem, we define $d(S_1,S_2)=d_b(S_1,S_2)+d_b(S_2,S_1)$.
 From the definitions, we have 
$$d_b(S_1,S_2)+d_b(S_2,S_1)+2\cdot a(S_1,S_2)=(m_1-1)+(m_2-1),$$ or, 
$$d_b(S_1,S_2)+d_b(S_2,S_1) = m_1+m_2-2\cdot a(S_1,S_2)-2.$$
Hence, the problem is really to maximize $a(S_1,S_2).$

\begin{definition}
Given $n$-dimensional vectors $\u=\langle u_1,u_2,...,u_n\rangle$ and
$\w=\langle w_1,w_2,...,w_n\rangle$, with $u_i,w_i\geq 0$, and $u_i,w_i\in \mathbb{N}$,
we say $\w$ is a {\em sub-vector} of $\u$  if $w_i\leq u_i$ for $i=1,...,n$, also denote this relation as $\w\leq \u$.
\end{definition}
Henceforth, we simply call $\u,\w$ integer vectors (with the understanding that
no item in a vector is negative).
\begin{definition}
Given two $n$-dimensional integer vectors $\u=\langle u_1,u_2,...,u_n\rangle$
and $\v=\langle v_1,v_2,...,v_n\rangle$, with $u_i,v_i\geq 0$, and
$u_i,v_i\in \mathbb{N}$, we say $\w$ is a {\em common sub-vector} of $\u$ and
$\v$ if $\w$ is a sub-vector of $\u$ and $\w$ is also a sub-vector of $\v$
(i.e., $\w\leq \u$ and $\w\leq \v$). Finally, $\w$ is the {\em maximum common
sub-vector} of $\u$ and $\v$ if there is no common sub-vector $\w'\neq \w$ of
$\u$ and $\v$ which satisfies $\w\leq \w'\leq \u$ or $\w\leq \w'\leq \v$.
\end{definition}

An example is illustrated as follows. We have $\u=\langle 3,2,1,0,5\rangle$,
$\v=\langle 2,1,3,1,4\rangle$, $w'=\langle 2,1,0,0,3\rangle$ and
$\w=\langle 2,1,1,0,4\rangle$. Both $\w$ and $\w'$ are common sub-vectors for
$\u$ and $\v$, $\w'$ is not the maximum common sub-vector of $\u$ and $\v$
(since $\w'\leq \w$) while $\w$ is.

Given a CNP $\u=\langle u_1,u_2,...,u_n\rangle$ and alphabet $\Sigma=(x_1,x_2,...,x_n)$, for $i \in \{1,2\}$, we use $S(\u)$ to denote the multiset of letters (genes)
corresponding to $\u$; more precisely, $u_i$ denotes the number of $x_i$'s in
$S(\u)$. Similarly, given a multiset of letters $Z$, we use $s(Z)$ to denote
a string where all the letters in $Z$ appear exactly once (counting
multiplicity; i.e, $|Z|=|s(Z)|$). $s(Z)$ is similarly defined when $Z$ is a CNP.
We present Algorithm 1 as follows.
\begin{enumerate}
\item Compute the maximum common sub-vector $\v$ of $\c_1$ and $\c_2$.
\item Given the gene alphabet $\Sigma$, compute $S(\v)$, $S(\c_1)$ and $S(\c_2)$.
Let $X=S(\c_1)-S(\v)$ and $Y=S(\v_2)-S(\v)$.
\item If $S(\v)=\emptyset$, then return two arbitrary strings $s(\c_1)$ and $s(\c_2)$
as $S_1$ and $S_2$, exit; otherwise, continue.
\item Find $\{x,y\}$, $x,y\in\Sigma$ and $x\neq y$, such that $x\in S(\v)$ and $y\in S(\v)$,
and exactly one of $x,y$ is in $X$ (say $x\in X$), and the other is in $Y$
(say $y\in Y$). If such an $\{x,y\}$ cannot be found then return two strings
$S_1$ and $S_2$ by concatenating letters in $X$ and $Y$ arbitrarily at the
ends of $s(\v)$ respectively, exit; otherwise, continue.
\item Compute an arbitrary sequence $s(\v)$ with the constraint that
the first letter is $x$ and the last letter is $y$. Then obtain
$s_1=s(\v)\circ x$ and $s_2=y\circ s(\v)$ ($\circ$ is the concatenation operator).
\item Finally, insert all the elements in $X-\{x\}$ arbitrarily at the two ends 
of $s_1$ to obtain $S_1$, and insert all the elements in $Y-\{y\}$ arbitrarily at the two ends of $s_2$ to obtain $S_2$.
\item Return $S_1$ and $S_2$.
\end{enumerate}

Let $\Sigma=\{a,b,c,d,e\}$. Also let $\c_1=\langle 2,2,2,4,1\rangle$ and
$\c_2=\langle 4,4,1,1,1\rangle$. We walk through the algorithm using
this input as follows.

\begin{enumerate}
\item The maximum common sub-vector $\v$ of $\c_1$ and $\c_2$ is $\v=\langle 2,2,1,1,1\rangle$.
\item Compute $S(\v)=\{a,a,b,b,c,d,e\}$, $S(\c_1)=\{a,a,b,b,c,c,d,d,d,d,e\}$ and\\
$S(\c_2)=\{a,a,a,a,b,b,b,b,c,d,e\}$. Compute $X=\{c,d,d,d\}$ and $Y=\{a,a,b,b\}$.
\item Identify $d$ and $a$ such that $d\in S(\v)$ and $a\in S(\v)$, and $d\in X$ while
$a\in Y$.
\item Compute $s(\v)=dabbcea$, $s_1=dabbcea\cdot d$ and $s_2=a\cdot dabbcea$.
\item Insert elements in $X-\{d\}=\{c,d,d\}$ arbitrarily at the
right end of $s_1$ to obtain $S_1$, and insert all the elements in $Y-\{a\}=\{a,b,b\}$ 
at the right end of $s_2$ to obtain $S_2$.
\item Return $S_1=dabbcea\cdot d\cdot cdd$ and 
$S_2=a\cdot dabbcea\cdot abb$.
\end{enumerate}

\begin{theorem}
Let $\c_1,\c_2$ be polynomially bounded. The number of common adjacencies generated by Algorithm 1 
is optimal with a value either $n^*$ or $n^*-1$, where $n^*=\sum_{i=1}^{n}v_i$ with the 
maximum common sub-vector of $\c_1$ and $\c_2$ being $\v=\langle v_1,v_2,...,v_n\rangle$.
\end{theorem}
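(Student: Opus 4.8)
The plan is to combine a universal upper bound $a(S_1,S_2)\le n^*$ (valid for \emph{every} pair of strings with $\cnpg{S_1}=\c_1$ and $\cnpg{S_2}=\c_2$) with a branch-by-branch analysis of Algorithm~1 showing it always outputs a pair attaining the maximum, this maximum being $n^*$ when the search in Step~4 succeeds (and also when $n^*=0$), and $n^*-1$ when Step~4 fails. Throughout I write $\c_i(s)$ for the entry of $\c_i$ at symbol $s$, and note that the maximum common sub-vector is just the componentwise minimum, so $\v(s)=\min\{\c_1(s),\c_2(s)\}$ and $n^*=\sum_s\v(s)$.

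First I would record a closed form for $a$. For a string $W$ and an unordered pair $p=\{x,y\}$ of symbols (loops $\{x,x\}$ allowed), let $\mu_W(p)$ be the number of indices $j$ with $\{W[j],W[j{+}1]\}=p$. The bipartite graph whose maximum matching defines $a(S_1,S_2)$ splits, over pair-types $p$, into disjoint complete bipartite graphs $K_{\mu_{S_1}(p),\,\mu_{S_2}(p)}$, hence
\[
a(S_1,S_2)=\sum_{p}\min\{\mu_{S_1}(p),\mu_{S_2}(p)\}.
\]
To bound this by $n^*$, I regard the matched pair-types as a multigraph $H$ on $\Sigma$ with $\min\{\mu_{S_1}(p),\mu_{S_2}(p)\}$ copies of each edge $p$, so $|E(H)|=a(S_1,S_2)$. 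The degree of $s$ in the adjacency multigraph of $S_i$ is at most $2\c_i(s)$, since each of the $\c_i(s)$ occurrences of $s$ in $S_i$ has at most two neighbours; and $H$ is dominated edge-by-edge by both adjacency multigraphs, so $\deg_H(s)\le 2\min\{\c_1(s),\c_2(s)\}=2\v(s)$. Summing and using the handshake identity, $2\,a(S_1,S_2)=\sum_s\deg_H(s)\le 2\sum_s\v(s)=2n^*$, i.e.\ $a(S_1,S_2)\le n^*$.

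Next I would sharpen this to $n^*-1$ precisely when Step~4 fails. Failure of that search means, up to swapping $\c_1$ and $\c_2$, that $\c_1(s)\le\c_2(s)$ for every symbol $s$ with $\v(s)>0$; hence $n^*=\sum_{\v(s)>0}\c_1(s)$. Let $S_1'$ be obtained from $S_1$ by deleting every symbol $s$ with $\v(s)=0$, so $|S_1'|=n^*$. A matched $2$-substring of $S_1$ can only involve symbols occurring in both strings, i.e.\ symbols with $\v(s)>0$, so it remains a distinct $2$-substring of $S_1'$; therefore $a(S_1,S_2)\le|S_1'|-1=n^*-1$ (and $a=0$ if $n^*=0$). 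For the matching lower bound here, the strings produced by Step~4 contain $s(\v)$, a length-$n^*$ string, as a common contiguous factor, which already yields $n^*-1$ common adjacencies; so Algorithm~1 is optimal with value $n^*-1$. The case $n^*=0$ (Step~3) is immediate: no pair of symbols then occurs in both strings, so $a(S_1,S_2)=0=n^*$ for every output.

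Finally, for the case where Step~4 succeeds: writing $s(\v)=x\,w\,y$ (length $n^*$, first symbol $x$, last symbol $y$, with $x\ne y$), the intermediate strings are $s_1=x\,w\,y\,x$ and $s_2=y\,x\,w\,y$, which are the two linear "unrollings" of the cyclic word $x\,w\,y$ --- each a length-$(n^*{+}1)$ word listing all $n^*$ cyclic $2$-substrings of that cycle, including the wrap-around pair $\{x,y\}$. Hence $\mu_{s_1}(p)=\mu_{s_2}(p)$ for every $p$, so $a(s_1,s_2)=\sum_p\mu_{s_1}(p)=|s_1|-1=n^*$. Step~6 only inserts symbols at the two ends of $s_1,s_2$, so these stay contiguous factors of $S_1,S_2$ and $\mu_{S_1}(p)\ge\mu_{s_1}(p)$, $\mu_{S_2}(p)\ge\mu_{s_2}(p)$, whence $a(S_1,S_2)\ge n^*$; with the universal bound this gives $a(S_1,S_2)=n^*$. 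One checks routinely that $\cnpg{S_1}=\c_1$ and $\cnpg{S_2}=\c_2$ in every branch, and that the procedure runs in polynomial time since $\c_1,\c_2$ are polynomially bounded (the maximum common sub-vector being the trivial componentwise minimum). I expect the sharpened upper bound to be the main obstacle: one must argue carefully that failure of Step~4 forces the one-sided domination $\c_1\le\c_2$ (or $\c_2\le\c_1$) on the symbols $s$ with $\v(s)>0$, and that passing to $S_1'$ (resp.\ $S_2'$) destroys no matched adjacency; by contrast, the universal $n^*$ bound and the two explicit constructions become mechanical once the $\min$-formula and the adjacency-multigraph viewpoint are in place.
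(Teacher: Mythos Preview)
Your proof is correct and follows essentially the same strategy as the paper: a degree-counting (handshake) argument gives the universal upper bound $a(S_1,S_2)\le n^*$, and the explicit constructions of Algorithm~1 supply the matching lower bounds $n^*$ or $n^*-1$. Your treatment is more formal---you make the decomposition $a(S_1,S_2)=\sum_p\min\{\mu_{S_1}(p),\mu_{S_2}(p)\}$ explicit and phrase the upper bound via an adjacency multigraph---and your handling of the Step~4 failure case is cleaner than the paper's: you observe that failure forces one-sided domination $\c_1(s)\le\c_2(s)$ on the support of $\v$, then pass to the subsequence $S_1'$ of length $n^*$ to cap the number of matchable $2$-substrings at $n^*-1$, whereas the paper argues more informally about which components can contribute adjacencies.
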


\begin{proof}
First, note that if $\v$ is a 0-vector (or $S(\v)=\emptyset$) then there will
not be any adjacency in $S_1$ and $S_2$. Henceforth we discuss $S(\v)\neq\emptyset$.

Notice that a common adjacency between $S_1$ and $S_2$ must come from two
letters which are both in $S(\v)$. That naturally gives us $n^*-1$ adjacencies,
where $n^*=|S(\v)|$, which can be done by using the letters in $S(\v)$ to form
two arbitrary strings $S_1$ and $S_2$ (for which $s(\v)$ is a common substring).
If $\{x,y\}$ can be found such that $x,y\in S(\v)$ and
$x\neq y$, and one of them is in $X$ (say $x\in X$), and the other is
in $Y$ (say $y\in Y$), then, obviously we could obtain $s_1=s(\v)\circ x$ and
$s_2=y\circ s(\v)$ which are substrings of $S_1$ and $S_2$ respectively. 
Clearly, there are $n^*=|S(\v)|$ adjacencies between $s_1$ and $s_2$ (and
also $S_1$ and $S_2$). 

To see that this is optimal, first suppose that no $\{x, y\}$ pair as above can be found.  
This can only occur when there are no two components $i<j$ in 
$\c_1=\langle c_{1,1},...,c_{1,i},...,c_{1,j}$,..., $c_{1,n}\rangle$,
$\c_2=\langle c_{2,1},...,c_{2,i},...,c_{2,j}$,..., $c_{2,n}\rangle$, and in the maximum common sub-vector $\v=\langle v_1,...,v_i$,..., $v_j,...,v_n\rangle$ of $\c_1$ and $\c_2$ which satisfy that
$\min\{c_{1,i},c_{2,i}\}=v_i\neq 0$ and $\max\{c_{1,i},c_{2,i}\}\neq v_i$, and
$\min\{c_{1,j},c_{2,j}\}=v_j\neq 0$ and $\max\{c_{1,j},c_{2,j}\}\neq v_j$. If this condition
holds, then all the components $i$ in $s(\c_1-\v)$ and $s(\c_2-\v)$, i.e., $c_{1,i}-v_i$ and $c_{2,i}-v_i$, have the property that at least one of the two is zero and $v_i=0$. Therefore, except for the letters corresponding to $\v$, no other adjacency can be formed. As any string with CNP $\v$ has $n^*$ characters, at most $n^*-1$ adjacencies can be formed.  
If an $\{x, y\}$ pair can be found, let $b \in \Sigma$, and let $v_b$ be the minimum copy-number of $b$ in $\c_1$ or $\c_2$, i.e., $v_b=\min\{c_{1,b},c_{2,b}\}$. Assume this minimum occurs in $\c_1$, w.l.o.g.  There can be at most $2v_b$ adjacencies involving $b$ in $\c_1$, and thus at most $2v_b$ adjacencies in common involving $v_b$.  Summing over every $b \in \Sigma$, the sum of common adjacencies, counted for each character individually, is at most $\sum_{b \in \Sigma}2v_b = 2n^*$.  Since each adjacency is counted twice in this sum, the number of common adjacencies is at most $n^*$.
%
\end{proof}

Note that if we only want the breakpoint distance between $S_1$ and $S_2$, then
the polynomial boundness condition of $\c_1$ and $\c_2$ can be withdrawn as we can
decide whether $\{x,y\}$ exists by searching directly in the CNPs (vectors).

\section{Concluding Remarks}

In this paper, we answered two recent open questions regarding the computational
complexity of the Minimum Copy Number Generation problem. Our technique could
be used for other combinatorial optimization problems where the solution
involves Set Cover whose solution must also be an exact cover. We also present
a polynomial time algorithm for the Copy Number Profile Conforming (CNPC)
problem when the distance is the classical breakpoint distance. In some sense,
the breakpoint distance is static, and we leave open the question for solving
or approximating CNPC with any (dynamic) rearrangement distance (like
reversal, duplication+deletion, etc).


\newpage 

\section*{Appendix}

\subsection*{Proof of Lemma~\ref{lem:events-to-cover-dup}}

We need some new notation and intermediate results before proving the lemma.

Let $E = (e_1, \ldots, e_k)$ be a sequence of events transforming genome $G$ into another genome $G'$.
We would like to distinguish each position of $G$ in order to know which specific character of $G$ is at the origin of a character of $G'$.  

To that end, we augment each individual character of $G$ with a unique identifier, which is its position in $G$. 
That is, let $G = g_1g_2 \ldots g_n$, define a new alphabet $\hat{\Sigma} = (g_1^{1}, g_2^{2}, \ldots, g_n^{n})$ and define the genome $\hat{G} = g_1^{1}g_2^{2} \ldots g_n^{n}$.  
Here, two characters $g_i$ and $g_j$ may be identical, but $g_i^{i}$ and $g_j^{j}$ are two distinct characters.  We call $\hat{\Sigma}$ the \emph{augmented alphabet} and $\hat{G}$ the \emph{augmented genome} of $G$.  
For instance if $G = aabcb$ and $\Sigma = (a, b, c)$, then $\hat{\Sigma} = (a^{1}, a^2, b^3, c^4, b^5)$ and $\hat{G} = a^{1} a^2 b^3 c^4 b^5$.

Since $G$ and $\hat{G}$ have the same length, we may apply the sequence $E$ on $\hat{G}$, resulting in a genome $\hat{G}' := \hat{G}\gevent{E}$ on alphabet $\hat{\Sigma}$.  Now $\hat{G}'$ may contain some characters of $\hat{\Sigma}$ multiple times owing to duplications, but if we remove the superscript identifier from the characters of $\hat{G}'$, we obtain $G'$.  The idea is that the identifiers on the characters of $\hat{G}'$ tell us precisely where each character of $\hat{G'}$ ``comes from'' in $\hat{G}$ (and thus $G$).

\begin{definition}
Let $G$ and $G'$ be genomes and let $E$ an event sequence such that $G' = G\gevent{E}$.  Let $\hat{G}$ be the augmented genome of $G$ and let $\hat{G}[i] = g^i$ be the character at position $i$.  

If there is at least one occurrence of $g^i$ in $\hat{G}\gevent{E}$, then position $i$ is called \emph{important} with respect to $E$. Otherwise, position $i$ is called \emph{unimportant} with respect to $E$.
\end{definition}

Roughly speaking, position $i$ is unimportant if it eventually gets deleted, and any character that was copied from position $i$ from a duplication also gets deleted, as well as a copy of this copy, and so on --- in other words, position $i$ has no ``descendant'' in $G'$ when applying $E$.

First, we prove some general properties that will be useful.  Recall that $G - s$ removes all occurrences of $s$ from $G$, and $\c - s$ puts $\c(s) = 0$.

\begin{proposition}\label{prop:replace-all-by-dont-care}
Let $G$ be a genome over alphabet $\Sigma$, let $\c$ be a CNP and let $s \in \Sigma$.
Then $\distgcnp(G - s, \c - s) \leq \distgcnp(G, \c)$. 
\end{proposition}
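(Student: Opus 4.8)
The plan is to take an optimal event sequence $E$ realizing $\distgcnp(G, \c)$, and to modify it into an event sequence $E'$ that operates on $G - s$ and produces a genome with CNP $\c - s$, using no more events than $E$. The natural idea is to ``project out'' the character $s$: for each event $e_i$ in $E$, I would like to define a corresponding event $e_i'$ that plays the same role but ignores $s$. The difficulty is that the index-based notation for events (deletions $(i,j)$ and duplications $(i,j,p)$) refers to positions, and deleting the $s$-occurrences shifts all positions, so I would need to track how positions in the $s$-free genome correspond to positions in the original genome as the events are applied.

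The key steps, in order, would be: (1) Process $E = (e_1, \ldots, e_k)$ one event at a time, maintaining the invariant that after applying the first $i$ events of $E$ to $G$ we get some genome $G_i$, and after applying the first $i$ (modified) events of $E'$ to $G - s$ we get exactly $G_i - s$. (2) For a deletion $e_i = (a, b)$ on $G_{i-1}$: the substring $G_{i-1}[a .. b]$ corresponds, after removing $s$-occurrences, to some substring of $G_{i-1} - s$ (possibly empty if $G_{i-1}[a..b]$ consists only of $s$'s); let $e_i'$ be the deletion of that (contiguous) substring, or a null operation if it is empty. Then $G_i - s = (G_{i-1} - s)\gevent{e_i'}$, and $e_i'$ uses at most one event. (3) For a duplication $e_i = (a, b, p)$ on $G_{i-1}$: the copied substring $G_{i-1}[a..b]$ with $s$-occurrences removed is again a contiguous block of $G_{i-1} - s$; the insertion point $p$ maps to a well-defined position in $G_{i-1} - s$; so $e_i'$ is the corresponding duplication (or null if $G_{i-1}[a..b]$ has no non-$s$ characters), and again the invariant is preserved. (4) After all $k$ steps, $G\gevent{E} - s = (G - s)\gevent{E'}$, and since $\cnpg{G\gevent{E}} = \c$ we get $\cnpg{(G-s)\gevent{E'}} = \cnpg{G\gevent{E} - s} = \c - s$; since $E'$ has at most $k = \distgcnp(G,\c)$ (non-null) events, the inequality follows.

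The main obstacle is step (3), and more generally making the position-translation bookkeeping precise: I must argue that removing all $s$'s from a contiguous substring of a genome yields a contiguous substring of the $s$-removed genome (this is immediate since deletion of a character preserves contiguity of the rest), that the ``except inside the copied substring'' constraint on insertion positions is respected by the translated duplication, and that null operations (when an event touches only $s$-characters, or when $p$ falls between two $s$'s that have been removed) can be safely dropped without breaking the invariant. A clean way to sidestep most of this is to phrase the argument abstractly: note that an event on a genome either deletes a set of consecutive characters or copies a consecutive block and reinserts it elsewhere; deleting all $s$'s is a fixed ``projection'' $\pi$ from genomes to $s$-free genomes, and one checks directly that for every event $e$ there is an event (or empty action) $e^\pi$ with $\pi(H\gevent{e}) = \pi(H)\gevent{e^\pi}$ for every genome $H$. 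Induction on $|E|$ then finishes it.
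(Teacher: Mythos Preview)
Your proposal is correct and follows essentially the same approach as the paper: take an optimal event sequence for $(G,\c)$ and adapt each event so that it acts on the $s$-free genome, maintaining the invariant that the current $s$-free genome equals the projection of the current original genome. The paper's own proof is in fact just a two-sentence sketch of exactly this idea (``it suffices to ensure that each $e'_i$ affects the same characters as $e_i$, with the exception of the $s$ characters --- we omit the details''), so your write-up is more detailed than what the paper provides, but the underlying argument is identical.
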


\begin{proof}
Let $E = (e_1, \ldots, e_k)$ be any optimal sequence of events changing $G$ into a genome $G'$ satisfying $\cnpg{G'} = \c$.  It is straightforward to see that $E$ can be adapted to a sequence $E' = (e_1', \ldots, e'_k)$ that transforms $G - s$ into $G' - s$ (it suffices to ensure that each $e'_i$ affects the same characters as $e_i$, with the exception of the $s$ characters - we omit the details).  Thus $\distgcnp(G - s, \c - s) \leq k = \distgcnp(G, \c)$.
\end{proof}

\begin{proposition}\label{prop:replace-a-zero}
Let $G$ and $G^*$ be two genomes with $\distgg(G, G^*) = k$, and let $E$ be a sequence of $k$ events transforming $G$ into $G^*$.  Suppose that position $p$ is unimportant w.r.t. $E$, and let $G'$ be the genome obtained from $G$ by replacing $G[p]$ by any other character.  Then $\distgg(G', G^*) \leq \distgg(G, G^*)$.
\end{proposition}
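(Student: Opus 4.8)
The plan is to show that the \emph{same} event sequence $E$ that transforms $G$ into $G^*$ also transforms $G'$ into $G^*$; this immediately yields $\distgg(G', G^*) \le k = \distgg(G, G^*)$. The intuition is exactly the meaning of ``unimportant'': $G$ and $G'$ differ only at position $p$, and that position has no descendant in $G^*$ under $E$, so whatever character we place at $p$ gets deleted along the way (together with all of its copies, copies of copies, and so on) and never influences the final genome.

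To make this precise I would reuse the augmented-genome machinery. Let $\hat{G}$ be the augmented genome of $G$ over $\hat{\Sigma}$, and write $\hat{G}[p] = g^p$. Let $\hat{H}$ be the augmented genome of $G'$: it uses the alphabet obtained from $\hat{\Sigma}$ by replacing $g^p$ with a single new character $h^p$ (same identifier $p$, different underlying letter), and $\hat{H}$ differs from $\hat{G}$ only by having $h^p$ at position $p$. Since $E$ is specified entirely through positions and $G, G'$ have the same length, $E$ is applicable to $\hat{H}$ and, as we will see, produces a string of the same length as on $\hat{G}$ at every step; in particular every event of $E$ remains well-defined on the intermediate genomes (including the ``no insertion inside the copied block'' constraint for duplications).

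The key step is the following invariant, proved by induction on $i$: for every $0 \le i \le k$, the strings $\hat{G}\gevent{e_1, \ldots, e_i}$ and $\hat{H}\gevent{e_1, \ldots, e_i}$ are identical except that each occurrence of $g^p$ in the former is replaced, at the same position, by an occurrence of $h^p$ in the latter. The base case $i = 0$ holds by construction, and the inductive step is immediate: a deletion or duplication acts only through positions, not through the actual characters, so applying $e_i$ to two strings related in this way yields two strings again related in this way (and of equal length). Instantiating the invariant at $i = k$ and using that $p$ is unimportant w.r.t. $E$, so $g^p$ does not occur in $\hat{G}\gevent{E}$, we get that $h^p$ does not occur in $\hat{H}\gevent{E}$ and, since there is then nothing to swap, $\hat{H}\gevent{E} = \hat{G}\gevent{E}$ as strings. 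Finally, erasing the superscript identifiers is a length-preserving, character-wise map that commutes with applying events; it sends $\hat{G}$ to $G$ and $\hat{H}$ to $G'$, hence $\hat{G}\gevent{E}$ to $G\gevent{E} = G^*$ and $\hat{H}\gevent{E}$ to $G'\gevent{E}$. As the two augmented results coincide, so do their images, giving $G'\gevent{E} = G^*$, and therefore $\distgg(G', G^*) \le k$.

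The only genuinely delicate point is the bookkeeping in the invariant, in particular confirming that every event of $E$ stays well-defined on the intermediate genomes of $G'$; but this is routine once one observes that the two runs of $E$, on $\hat{G}$ and on $\hat{H}$, proceed in lockstep with identical lengths and identical position structure at every step, so validity on one run transfers to the other.
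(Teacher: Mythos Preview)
Your proposal is correct and follows essentially the same approach as the paper: apply the same event sequence $E$ to the augmented genome of $G'$, observe that the only difference from $\hat{G}\gevent{E}$ is that occurrences of $g^p$ become $h^p$, and conclude from unimportance of $p$ that the results coincide. Your version is somewhat more explicit (spelling out the step-by-step invariant and the well-definedness of events on the parallel run), but the argument is the same.
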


\begin{proof}
Let $g = G[p]$, let $h \in \Sigma \setminus \{g\}$, and suppose that $G'$ is identical to $G$ except that $G'[p] = h$. Let $\hat{G}$ be the augmented $G$, with 
$\hat{G}[p] = g^p$.    
Moreover let $\hat{G'}$ be the augmented $G'$, so that $\hat{G'}[p] = h^p$.  Then the only difference between 
$\hat{G}\gevent{E}$ and $\hat{G'}\gevent{E}$ is that the occurrences of $g^p$ in $\hat{G}\gevent{E}$ are replaced by $h^p$ in $\hat{G'}\gevent{E}$. 
But since $p$ is unimportant, $\hat{G}\gevent{E}$ has no occurrence of $g^p$, meaning that $\hat{G}\gevent{E} = \hat{G'}\gevent{E}$.  Thus by removing the superscript identifiers from either $\hat{G}$ or $\hat{G}'$, we obtain $G^*$, implying that $G'\gevent{E} = G^*$.  It follows that $\distgg(G', G^*) \leq \distgg(G, G^*)$.
\end{proof}

The next technical lemma states that if a genome alternates between positions to keep and positions to delete  $n$ times, then we need $n$ events to remove the unimportant ones.  

\begin{lemma}\label{lem:alternate-genomes}
Let $\Sigma = X \cup Y$ be an alphabet defined by two disjoint sets $X = \{x_1, \ldots, x_n\}$ and $Y$.
Let $G = Y_0 x_1 Y_1 x_2 Y_2 \ldots x_n Y_n $ be a genome on $\Sigma$, where for all $i \in [n]$, $Y_i$ is a non-empty string over alphabet $Y$ and $Y_0$ is a possibly empty string on alphabet $Y$.   Moreover let $\c$ be a CNP such that 
$\c(x_i) = 1$ for all $x_i \in X$ and $\c(y) = 0$ for all $y \in Y$.  
Then $\distgcnp(G, \c) \geq n$, with equality when $Y_0$ is empty.
\end{lemma}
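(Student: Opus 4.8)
The statement has two parts: a lower bound $\distgcnp(G,\c)\ge n$ that must hold for \emph{any} event sequence (deletions and duplications), and a matching upper bound when $Y_0$ is empty. The upper bound is the easy direction: when $Y_0$ is empty, $G=x_1Y_1x_2Y_2\ldots x_nY_n$, and I can delete each block $Y_i$ with a single deletion (the block $Y_i$ is a contiguous substring lying strictly between $x_i$ and $x_{i+1}$, or after $x_n$), giving $n$ deletions that produce a genome with CNP exactly $\c$; hence $\distgcnp(G,\c)\le n$, and together with the lower bound we get equality. So the work is entirely in the lower bound.

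\textbf{Lower bound strategy.} I would argue by contradiction: suppose some sequence $E=(e_1,\ldots,e_k)$ with $k<n$ transforms $G$ into a genome $G^*$ with $\cnpg{G^*}=\c$. The target $\c$ assigns copy-number $0$ to every $y\in Y$, so $G^*$ contains none of the $Y$-characters; thus every position of $G$ holding a $Y$-character is unimportant with respect to $E$ (its augmented character, and all duplicates of it, must disappear). Meanwhile $\c(x_i)=1$, so $G^*$ contains exactly one copy of each $x_i$. Now I want to use a "cut" or "breakpoint-counting" argument: consider the $n$ gaps of $G$ immediately following $x_1,\ldots,x_n$ respectively (the gap after $x_i$ starts inside the nonempty block $Y_i$). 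Because each $Y_i$ is nonempty and its characters are all unimportant, every such gap must be "touched" by some event of $E$ — more precisely, for the characters of $Y_i$ to vanish, at least one deletion must remove a character lying in the region immediately after $x_i$, or a duplication must have moved things around, but in the end the adjacency $x_i\,Y_i[1]$ present in $G$ is destroyed. I would formalize this via the augmented-genome machinery: track, for each $i$, the position $p_i$ of $Y_i[1]=\hat G[p_i]$, which is unimportant; the event that "first makes $p_i$ disappear from the current genome" must be one of the $e_j$, and I claim distinct $i$'s require distinct $e_j$'s. The cleanest way to see the pigeonhole is: an event $e_j$ that is a deletion removes one contiguous substring, which lies between two consecutive surviving $x$-characters in the current genome, so it can be "responsible" for at most one original block $Y_i$; a duplication never removes characters, so it cannot be the event that finishes off any $Y_i$. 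Hence each of the $n$ blocks needs its own deletion event, forcing $k\ge n$.

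\textbf{Making the per-event charging rigorous.} The subtle point is that duplications can copy $x_i$-characters and $Y$-characters, so the "current genome" during the process need not look like $G$, and a single deletion in the current genome might span material originating from several different original blocks $Y_i$. So I cannot literally say "one deletion kills one block." Instead I would charge more carefully: for each $i\in[n]$, since position $p_i$ (the first character of $Y_i$) is unimportant, there is a well-defined last event $e_{j(i)}$ after which no occurrence of $g^{p_i}$ remains in the augmented current genome; this $e_{j(i)}$ must be a deletion (a duplication only adds occurrences) and it deletes \emph{the} copy of $g^{p_i}$ adjacent to (a copy of) $x_i$ — actually I should instead consider the specific occurrence of $\hat G[p_i]$ that sits right after an occurrence of $x_i^{(\text{pos of }x_i)}$. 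The argument I would push is that in the augmented genome $\hat G\gevent{e_1\ldots e_{j-1}}$, the characters $x_1^{a_1},\ldots,x_n^{a_n}$ (where $a_i$ is the original position of $x_i$) either survive to the end or get deleted, and since $\c(x_i)=1$ exactly one occurrence of each survives; by an invariant I can maintain — namely that right before event $e_j$ is applied, the surviving-to-the-end occurrences of $x_i^{a_i}$ for distinct $i$ appear in order and are separated by at least one not-yet-deleted unimportant character — a single deletion can reduce this "number of separated alternations" by at most one, and it starts at $n$ and must reach $0$ (since in $G^*$ all $Y$ is gone), while duplications cannot decrease it. Counting gives $k\ge n$.

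\textbf{The main obstacle.} The delicate part, and where I expect to spend the most care, is precisely the bookkeeping with duplications: setting up an invariant/potential function $\Phi$ on augmented intermediate genomes such that (i) $\Phi(\hat G)=n$, (ii) $\Phi$ at the end is $0$, (iii) no duplication increases... (rather, decreases) $\Phi$, and (iv) each deletion decreases $\Phi$ by at most $1$. The natural candidate is the number of indices $i$ such that the (unique surviving-to-the-end) occurrence of $x_i^{a_i}$ is immediately followed — possibly after other surviving $x_j^{a_j}$'s are skipped — hmm, this needs the right phrasing. A robust choice: $\Phi$ counts the number of consecutive pairs, among the surviving $x^{a}$-characters listed in genome order, whose "between-region" currently contains at least one character destined for deletion, plus an edge term for whatever follows the last surviving $x$. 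Verifying monotonicity of $\Phi$ under duplications (which can shuffle the between-regions) and the "$-1$ per deletion" bound under deletions (which remove one contiguous block, hence empty out at most one between-region and merge at most ... ) is the technical heart. I would also rely on Proposition~\ref{prop:replace-all-by-dont-care} and Proposition~\ref{prop:replace-a-zero} as simplifying tools — e.g. to first reduce to the case $Y_0$ empty or to normalize the $Y$-characters — but the core counting cannot be bypassed by those alone.
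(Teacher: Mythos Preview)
Your upper bound is fine and matches the paper. The lower bound, however, has a genuine gap: the potential you sketch can \emph{decrease} under a duplication, so property~(iii) fails and the per-step charging collapses.

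Concretely, take $n=3$ and $G = x_1\,a\,x_2\,b\,x_3\,c$ with $a,b,c\in Y$. Let $e_1$ be the duplication $(3,5,1)$, which copies $x_2\,b\,x_3$ and inserts it right after $x_1$, giving $G' = x_1\,x_2\,b\,x_3\,a\,x_2\,b\,x_3\,c$. Let $e_2$ delete positions $5$--$9$ of $G'$ (the block $a\,x_2\,b\,x_3\,c$) and let $e_3$ delete the remaining $b$; then $G\gevent{e_1,e_2,e_3}=x_1x_2x_3$ has CNP $\c$, so $E=(e_1,e_2,e_3)$ is a valid sequence. With respect to this $E$, the surviving occurrences of $x_1,x_2,x_3$ in $G'$ sit at positions $1,2,4$: the inserted copies of $x_2,x_3$ survive and the originals (at positions $6,8$) do not. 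Every natural version of your $\Phi$ --- number of gaps between consecutive surviving $x$'s that contain a doomed character, number of maximal doomed runs, or number of survive-then-doomed boundaries --- evaluates to $3$ on $G$ but only $2$ on $G'$. The duplication has collapsed what were separate doomed regions into a single contiguous doomed suffix, which one later deletion then clears. The total event count is still $3$, so the lemma is not violated; but the monotonicity you rely on is.

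The paper avoids this entirely by arguing by induction on $n$ rather than via a potential. Given an optimal $E=(e_1,\ldots,e_k)$, it case-splits on $e_1$. If $e_1$ is a deletion it cannot touch any $x_i$ (else that $x_i$ is gone forever), so it lies inside a single $Y_i$; removing the adjacent $x_i$ from both $G':=G\gevent{e_1}$ and $\c$ (Proposition~\ref{prop:replace-all-by-dont-care}) yields an instance of the same shape with $n-1$ blocks. If $e_1$ is a duplication, at most one $X$--$X$ adjacency is created in $G'$; the paper removes the $x_i$ sitting at that adjacency, then --- and this is the key move --- takes a \emph{fresh} optimal sequence $\tilde E$ for $G'-x_i$ and uses Proposition~\ref{prop:replace-a-zero} to replace each unimportant duplicate $x_j$ by a $Y$-character, again restoring the alternating form with $n-1$ blocks. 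Because the paper re-optimizes at each inductive step instead of tracking the tail $(e_2,\ldots,e_k)$, it never has to maintain an invariant through duplications, which is precisely what breaks in your approach.
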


\begin{proof}
It is easy to see that $d(G, \c) \leq n$ when $Y_0$ is empty, since one can delete every $Y_i$ substring one by one.  Thus we only need to show that $\distgcnp(G, c) \geq n$ whether $Y_0$ is empty or not, which we prove by induction over $n$.  When $n = 1$, this is obvious, as we need to at least delete the $Y_1$ segment.  For $n > 1$, take any sequence of events $E = (e_1, \ldots, e_k)$ that transforms $G$ into a genome $G^*$ with $\cnpg{G^*} = \c$, and let $G' := G\gevent{e_1}$.

If $e_1$ is a deletion, then $e_1$ cannot have deleted an $x_i$ character because there would be $0$ left whereas $\c(x_i) = 1$ (and we cannot create new copies of $x_i$ once its number of occurrences is $0$).  Thus $e_1$ could only delete some or all characters from a single $Y_i$ substring, $i \in \{0, \ldots, n\}$.  Let $Y'_i$ be the (possibly empty) substring that remains from $Y_i$ after the $e_1$ deletion. 
Note that if $i > 0$, then 
$Y'_i$ is preceded by the $x_i \in X$ character.  Define $x^* = x_i$ if $i > 0$, and $x^* = x_1$ if $i = 0$.   Consider the genome $G' - x^*$.  It has the form $Y_0x_1Y_1 \ldots Y_{i-1} Y'_i x_{i+1} Y_{i+1} \ldots x_nY_n$ if $i > 0$, and the form $Y_0'Y_1x_2Y_2 \ldots x_nY_n$ if $i = 0$.  In either case, the $x_i$'s and $Y$ substring alternate $n - 1$ times, and we may use induction on $G' - x^*$ to deduce that $\distgcnp(G' - x^*, \c - x^*) \geq n - 1$.
Moreover by Proposition~\ref{prop:replace-all-by-dont-care}, $\distgcnp(G' - x^*, \c - x^*) \leq \distgcnp(G', \c)$.  Thus $\distgcnp(G, \c) = \distgcnp(G', \c) + 1 \geq n - 1 + 1 = n$, as desired.

So suppose that $e_1$ is a duplication.  In $G'$, each occurrence of an $x_i \in X$ character is followed by a $Y$ character except at most possibly one.  In other words, after a duplication there can be at most one position $p$ such that both $G'[p]$ and $G'[p+1]$ are in $X$, which happens when we insert a duplicated substring between some $x_i$ and $Y_i$, or when we insert a duplicated substring between some $Y_i$ and $x_{i+1}$.  If it is the case that $G'[p]$ and $G'[p+1]$ are in $X$, then put $x_i = G[p]$. If no such position $p$ exists, then choose $x_i \in X$ arbitrarily.  

Now consider the string $G' - x_i$.  In $G' - x_i$, each $X$ character is preceded and followed by a $Y$ substring (possibly with the exception of $x_1$ which might have no preceding character).  By Proposition~\ref{prop:replace-all-by-dont-care} we know that $\distgcnp(G' - x_i, \c - x_i) \leq \distgcnp(G', \c)$.
We cannot use induction on $G' - x_i$ yet, as it is possible that some character in $X \setminus \{x_i\}$ has two occurrences in $G' - x_i$ owing to the $e_1$ duplication, and thus we do not have the form required for induction.  Suppose that $x_j \in X \setminus \{x_1\}$ does have two occurrences in $G' - x_i$.
Let $\tilde{E}$ be an optimal sequence of events turning $G' - x_i$ into a genome $G^*$ with copy-number profile $\c - x_i$.
Since $\c(x_j) = 1$, only one position of $G' - x_i$ containing character $x_j$ is important w.r.t. $\tilde{E}$.  So take the unimportant position of $G' - x_i$ having character $x_j$ and replace it by any character from $Y$, yielding another genome $\tilde{G}'$.  By Proposition~\ref{prop:replace-a-zero},  ${\distgcnp(\tilde{G}', \c - x_i) \leq \distgcnp(G' - x_i, \c - x_i)}$.  We repeat this for every $x_j \in X \setminus \{x_i\}$ having two occurrences, and call the resulting genome $\tilde{G}''$.  We note that ${\distgcnp(\tilde{G}'', \c - x_i) \leq \distgcnp(G' - x_i, \c - x_i)}$.  After relabeling if needed, $\tilde{G}''$ has the form $Y_0x_1Y_1x_2Y_2 \ldots x_{n-1}Y_{n-1}$.  Using induction, we know that 
\[n - 1 \leq \distgcnp(\tilde{G}'', \c - x_i) \leq \distgcnp(G' - x_i, \c - x_i) \leq \distgcnp(G', \c)\]

The lemma then follows from the fact that $\distgcnp(G', \c) = \distgcnp(G, \c) - 1$.
\end{proof}

This allows us to prove the required assumption.

\noindent 
$\blacktriangleright$
\textbf{Lemma~\ref{lem:events-to-cover-dup}}.  \emph{
Given a sequence of $k$ events (deletions and duplications) transforming $G$ into a genome with CNP $\c$, one can obtain a set cover for $\S$ of cardinality at most $k$.
}

\begin{proof}
Suppose that the events $E = (e_1, \ldots, e_k)$  transform $G$ into a genome 
$G^*$ with CNP $\c$.  
We construct a set cover for $\S$ of cardinality $k$.
For a position $p$ with $G[p] = \e{u} \in \Sigma_U$, define $pred(p)$ as the first $\Sigma_{\S}$ character to the left of position $p$.
To be precise, if $p'$ is the largest integer satisfying $G[p'] \in \Sigma_{\S}$ and $p' < p$, then $pred(p) = G[p']$.  Note that since $G[1] = \s{S_1}$, $pred(p)$ is well-defined.
Notice that by construction, if $G[p] = \e{u}$ and $\s{S} = pred(p)$, then $u \in S$.
The set of $pred(p)$ of unimportant positions $p$ will correspond to our set cover, which we now prove by separate claims.

\begin{nclaim} 
For each $u \in U$, there is at least one position $p$ of $G$ such that ${G[p] = \e{u}}$ and such that $p$ is unimportant w.r.t. $E$.
\end{nclaim}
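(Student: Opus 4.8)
The plan is to argue by contradiction: suppose some $u \in U$ has the property that \emph{every} position $p$ with $G[p] = \e{u}$ is important w.r.t. $E$. Since $\c(\e{u}) = f(u) - 1$ and there are exactly $f(u)$ occurrences of $\e{u}$ in $G$ (one inside each $q(S_i)$ with $u \in S_i$), if all $f(u)$ original positions survive into $G^*$, then $G^*$ contains at least $f(u)$ occurrences of $\e{u}$ coming from these original positions alone — already one too many. The subtlety is that ``important'' only guarantees \emph{at least one} descendant of each position in $G^*$, so naively this gives $n_{\e{u}}(G^*) \geq f(u) > f(u) - 1 = \c(\e{u})$, a contradiction. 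I would make this counting argument precise using the augmented genome $\hat{G}$: the characters $\e{u}^{p_1}, \ldots, \e{u}^{p_{f(u)}}$ (where $p_1 < \cdots < p_{f(u)}$ are the positions of $\e{u}$ in $G$) are pairwise distinct in $\hat{\Sigma}$, and $p_\ell$ being important means $\hat{G}^* := \hat{G}\gevent{E}$ contains $\e{u}^{p_\ell}$ at least once; removing superscripts, $G^*$ then has at least $f(u)$ occurrences of $\e{u}$, contradicting $\cnpg{G^*} = \c$.

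Hence at least one original position $p_\ell$ of $\e{u}$ must be unimportant, which is exactly the statement of the Claim. The main thing to get right is the bookkeeping that distinct original positions yield distinct augmented characters, so that their surviving descendants are genuinely different occurrences and cannot be ``shared'' — this is immediate from the definition of $\hat{\Sigma}$, but it is the crux, since without the augmented alphabet one might worry that two original $\e{u}$ positions collapse onto the same occurrence in $G^*$ (they cannot, as they carry different superscripts). No delicate estimate is needed; the argument is a clean pigeonhole on the $f(u)$ versus $f(u)-1$ gap built into the construction of $\c$.

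I expect the only real obstacle to be notational: one must be careful that duplications can only \emph{increase} the occurrence count of any augmented character, never decrease it below what the surviving original positions contribute, so that the inequality $n_{\e{u}}(G^*) \geq |\{\ell : p_\ell \text{ important}\}|$ holds. Since deletions remove characters and duplications add them, and each important original position contributes at least one copy of its (unique) augmented character, this lower bound is valid, completing the proof of the Claim.
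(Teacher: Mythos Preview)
Your argument is correct and matches the paper's proof essentially line for line: assume for contradiction that all $f(u)$ positions with character $\e{u}$ are important, deduce that $G^*$ has at least $f(u)$ occurrences of $\e{u}$, and contradict $\c(\e{u}) = f(u)-1$. The paper states this in two sentences without spelling out the augmented-alphabet bookkeeping you include, but your added detail about distinct superscripts ensuring distinct surviving occurrences is exactly the implicit justification behind the paper's phrase ``has a descendant in $G^*$.''
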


\begin{proof}
If we assume this is not the case, then each of the $f(u)$ positions $p$ of $G$ having ${G[p] = \e{u}}$ has a descendant in $G^*$, implying that $G^*$ has at least $f(u)$ copies of $\e{u}$ and thereby contradicting that $G^*$ complies with $\c(\e{u}) = f(u) - 1$.
\end{proof}


Given that the claim holds, let $P = \{p_1, \ldots, p_m\}$ be any set of positions of $G$ such that for each $i \in [m]$, $G[p_i] = \e{u_i}$ and $p_i$ is unimportant w.r.t. $E$ (choosing arbitrarily if there are multiple choices for $p_i$).
Define $\Sigma_P = \{pred(p_i) : p_i \in P\}$ and $\S^* = \{S_i \in \S : \s{S_i} \in \Sigma_P\}$.

\begin{nclaim} 
$\S^*$ is a set cover.
\end{nclaim}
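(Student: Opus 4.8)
The plan is to verify the covering condition of $\S^*$ elementwise, and this should be essentially immediate from the preceding claim together with the block structure of $G$. So first I would fix an arbitrary element $u \in U$ and aim to exhibit a set of $\S^*$ containing $u$.

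By the previous claim there is at least one position of $G$ carrying the character $\e{u}$ that is unimportant w.r.t. $E$, and by the construction of $P$ exactly one such position $p_i \in P$ was selected with $G[p_i] = \e{u}$. I would then look at $\s{S} := pred(p_i)$, the nearest $\Sigma_{\S}$-character to the left of $p_i$. Since $G = \s{S_1}q(S_1)\s{S_2}q(S_2)\cdots\s{S_n}q(S_n)$ and each block $q(S)$ contains no separator characters, the position $p_i$ lies inside a single block $q(S)$ whose leftmost-preceding separator is precisely $\s{S}$; since $q(S)$ contains $\e{v}$ exactly once for each $v \in S$, the equality $G[p_i] = \e{u}$ forces $u \in S$. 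This is exactly the observation already recorded right after the definition of $pred$, so no new argument is required here. Finally, $\s{S} = pred(p_i) \in \Sigma_P$ by definition of $\Sigma_P$, hence $S \in \S^*$ by definition of $\S^*$, and $S$ covers $u$. As $u$ was arbitrary, $\S^*$ is a set cover of $\S$.

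I do not expect a genuine obstacle in this particular claim: it reduces to combining the existence of an unimportant $\e{u}$-position for every $u$ (handled in the adjacent claim) with the fact that each $\e{u}$-character in $G$ sits in the block of a set containing $u$. The delicate part of Lemma~\ref{lem:events-to-cover-dup} is elsewhere — proving that those unimportant positions exist, and afterwards bounding $|\S^*| \le k$, for which one expects to invoke Lemma~\ref{lem:alternate-genomes} on a suitably reduced genome to show that at least $|\S^*|$ events were spent. Those steps, not the present claim, are where the real work lies.
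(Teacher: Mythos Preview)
Your proof is correct and follows exactly the same approach as the paper's own proof: fix $u \in U$, take the selected unimportant position $p_i \in P$ with $G[p_i] = \e{u}$, and observe that $pred(p_i) = \s{S}$ satisfies $u \in S$ and $S \in \S^*$. Your assessment of where the real difficulty in Lemma~\ref{lem:events-to-cover-dup} lies is also accurate.
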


\begin{proof}
For each $u_i \in U$, there is an unimportant position $p_i \in P$ such that $G[p_i] = \e{u_i}$.
Moreover, $pred(p_i)$ is some character $\s{S}$ such that $\s{S} \in \Sigma_P$ and such that $u_i \in S$.  Since $S \in \S^*$, it follows that each $u_i$ is covered.
\end{proof}

It remains to show that $\S^*$ has at most $k$ sets. Denote ${P' = P \cup \{p : G[p] \in \Sigma_P\}}$. 
Let $\tilde{G}$ be the subsequence of $G$ obtained by keeping only the positions in $P'$ (i.e. if we denote $P' = \{p'_1, \ldots, p'_l\}$ with $p'_1 < p'_2 < \ldots < p'_l$, then $\tilde{G} = G[p'_1]G[p'_2] \ldots G[p'_l]$).
Furthermore, define the CNP $\tilde{\c}$ such that $\tilde{\c}(\s{S_i}) = 1$ for all $\s{S_i} \in \Sigma_P$, $\tilde{\c}(\s{S_i}) = 0$ for all $\s{S_i} \in \Sigma_{\S} \setminus \Sigma_P$, and 
$\tilde{\c}(\e{u}) = 0$ for all $\e{u} \in \Sigma_U$.  Note that $\tilde{G}$ has the form $\s{S_1}D_1\s{S_2}D_2 \ldots \s{S_r}D_r$ for some $r$, where the $D_i$'s are substrings over alphabet $\Sigma_U$.  This is the same form as in Lemma~\ref{lem:alternate-genomes}. 

\begin{nclaim}\label{claim:tilde-guys}
$\distgcnp(\tilde{G}, \tilde{\c}) \leq k$.
\end{nclaim}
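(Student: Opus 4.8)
The goal of Claim~\ref{claim:tilde-guys} is to show that the subsequence $\tilde{G}$, which alternates between the $\Sigma_P$ separators and blocks of $\Sigma_U$ characters, can be reduced to the profile $\tilde{\c}$ in at most $k$ events. The plan is to take the event sequence $E$ that transforms $G$ into $G^*$ with $\cnpg{G^*}=\c$ and project it down to $\tilde{G}$, arguing that the projected sequence has length at most $k$ and reaches $\tilde{\c}$.

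First I would make precise the notion of restricting an event sequence to a subsequence. Working with the augmented genome $\hat G$ and the augmented subsequence $\hat{\tilde G}$ (which keeps exactly the positions in $P'$, each carrying its original superscript), I would track, event by event, how the characters of $\hat{\tilde G}$ are moved, copied, and deleted by $E$. Concretely, run $E$ on $\hat G$ and at each step record the sub-multiset of positions of $\hat{\tilde G}$-origin characters that are present; a deletion in $E$ either touches some of these characters or none of them, and likewise for a duplication. This yields a derived sequence $E'$ where each event of $E$ contributes at most one event acting on $\hat{\tilde G}$ (an event that touches no $P'$-origin character contributes nothing). Hence $|E'|\le k$. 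One routine point to verify is that $E'$ is a legal event sequence on $\tilde G$ — that deletions delete contiguous stretches and duplications copy contiguous stretches — which follows because the relative order of surviving characters is preserved under taking subsequences and under applying $E$.

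Next I would identify the genome reached by $E'$. By construction of $P$, every position $p_i\in P$ is unimportant w.r.t.\ $E$, so all characters of $\hat{\tilde G}$ that came from $\Sigma_U$ positions have no descendant in $\hat G\gevent{E}$ — after running $E'$ there are no $\Sigma_U$ characters left, matching $\tilde{\c}(\e u)=0$. For the separator characters $\s{S_i}\in\Sigma_P$: each such $\s{S_i}$ occurs exactly once in $G$ and $\c(\s{S_i})=1$, so in $G^*$ there is exactly one occurrence, and that occurrence must descend from the single original position of $\s{S_i}$ (no new copies can be created once the count would have to stay at one and the original is needed). Thus each $\s{S_i}\in\Sigma_P$ survives exactly once under $E'$, giving $\tilde{\c}(\s{S_i})=1$, while separators not in $\Sigma_P$ simply do not appear in $\tilde G$ at all. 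So $\cnpg{\tilde G\gevent{E'}}=\tilde{\c}$, and therefore $\distgcnp(\tilde G,\tilde{\c})\le|E'|\le k$.

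The main obstacle I anticipate is the bookkeeping in the projection step — making the correspondence between events of $E$ and events of $E'$ fully rigorous, especially handling duplications whose copied substring only partially overlaps $P'$, and handling the case where an event of $E$ splits or merges blocks of $\tilde G$ in a way that still yields a single legal event on the subsequence. The cleanest route is to phrase everything in terms of the augmented alphabet and to argue purely about which superscripted characters are present after each prefix of $E$, so that legality on the subsequence is inherited from legality on $\hat G$. Once Claim~\ref{claim:tilde-guys} is in hand, combining it with Lemma~\ref{lem:alternate-genomes} applied to $\tilde G$ (which has exactly the required alternating form with $|\Sigma_P|$ separators and $Y_0$ empty since $G[1]=\s{S_1}\in\Sigma_P$ whenever $\Sigma_P$ is nonempty) gives $|\S^*|=|\Sigma_P|\le\distgcnp(\tilde G,\tilde{\c})\le k$, which together with the earlier claim that $\S^*$ is a set cover completes the proof of Lemma~\ref{lem:events-to-cover-dup}.
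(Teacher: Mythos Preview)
Your projection argument is correct in outline, and the bookkeeping you flag as the main obstacle does go through: restricting each event of $E$ to the $P'$-origin characters yields a (possibly empty) contiguous block to delete or copy, so each event of $E$ contributes at most one legal event to $E'$, and the augmented-genome viewpoint shows $\hat{\tilde G}\gevent{E'}$ is exactly the $P'$-restriction of $\hat G\gevent{E}$, from which $\cnpg{\tilde G\gevent{E'}}=\tilde{\c}$ follows.  The paper, however, sidesteps the projection entirely.  It replaces every non-$P'$ position of $G$ by a fresh dummy character $\lambda$, obtaining a genome $G'$ of the \emph{same length} as $G$; the original sequence $E$ can therefore be applied verbatim to $G'$, reaching a genome whose profile $\c'$ agrees with $\tilde{\c}$ on $\Sigma$ and has some number $l$ of $\lambda$'s.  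Then Proposition~\ref{prop:replace-all-by-dont-care} gives $\distgcnp(G'-\lambda,\c'-\lambda)\le\distgcnp(G',\c')\le k$, and one simply observes that $G'-\lambda=\tilde G$ and $\c'-\lambda=\tilde{\c}$.  The paper's route is shorter and avoids verifying that projected events remain legal; your route is more hands-on and self-contained.  One small slip in your final paragraph: $\s{S_1}$ need not lie in $\Sigma_P$, so your justification for ``$Y_0$ empty'' is off (the correct reason is that every $p_i\in P$ has its predecessor $\s{S}\in\Sigma_P$ at a smaller position in $P'$); in any case Lemma~\ref{lem:alternate-genomes} gives the needed lower bound $\distgcnp(\tilde G,\tilde{\c})\ge|\Sigma_P|$ regardless of whether $Y_0$ is empty.
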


\begin{proof}
Let $G'$ be the genome obtained by replacing every position $p$ of $G$ by some dummy character $\lambda$, except for the positions of $P'$ (thus if we remove all the $\lambda$ occurrences we obtain $\tilde{G})$.  Since $G$ and $G'$ have the same length, we can apply the $E$ events on $G'$.
Let $G'' := G'\gevent{E}$, and let $l$ be the number of occurrences of $\lambda$ in $G''$.  Recall that $P'$ contains only positions $p$ such that $G[p] \in \Sigma_{P}$, or such that $p$ is unimportant w.r.t $E$ and $G[p] \in \Sigma_U$.  It follows that if a position $q$ is important w.r.t. $E$, then $G'[q] \in \Sigma_{P} \cup \{\lambda\}$.  Moreover, for any $\s{S} \in \Sigma_P$, $G''$ has as many occurrences of $\s{S}$ as in $G\gevent{E}$.  In other words, $G''$ has one occurrence of each $\s{S} \in \Sigma_P$ and the rest is filled with $\lambda$.  

Let $\c'$ be the CNP satisfying 
$\c'(\lambda) = l$, $\c'(\s{S_i}) = \tilde{\c}(\s{S_i}) = 1$ for every $\s{S_i} \in \Sigma_P$, and $\c'(x) = 0$ for any other character $x$.
Then clearly, $\c' = \cnpg{G''}$, which implies $\distgcnp(G', \c') \leq k$ since $E$ transforms $G'$ into $G''$.
Moreover by Proposition~\ref{prop:replace-all-by-dont-care}, $\distgcnp(G' - \lambda, \c' - \lambda) \leq \distgcnp(G', \c') \leq k$.  The claim follows from the observation that $\tilde{G} = G' - \lambda$ and $\tilde{\c} = \c' - \lambda$.
\end{proof}

Observe that $\tilde{G}$ and $\tilde{\c}$ have the required form for Lemma~\ref{lem:alternate-genomes} (with $|\Sigma_P|$ important positions), 
and so $\distgcnp(\tilde{G}, \tilde{\c}) \geq |\Sigma_P|$.  It follows from Claim~\ref{claim:tilde-guys} that $k \geq \distgcnp(\tilde{G}, \tilde{\c}) \geq |\Sigma_P| = |\S^*|$.  We have thus constructed a set cover $\S^*$ for $\S$ of cardinality at most $k$, which completes the proof.
\end{proof}

\subsection*{Proof of Theorem~\ref{thm:w1-hard}}

Let $(\S, k)$ be a SET-COVER-EC instance.  Construct a genome $G$ and copy-number vector $\c$ from $\S$ as described in Section~\ref{subsec:construction}. 
We show that $\S$ admits a set cover of size $k$ if and only if $\distgcnp(G, \c) \leq k$.  
If $\S$ admits a set cover $\S^*$ of size $k$, then $\S^*$ is also an exact cover and by Lemma~\ref{lem:exact-to-dels}, one can apply $k$ deletions on $G$ to obtain the desired vector $\c$.
Conversely, Lemma~\ref{lem:events-to-cover} a sequence of $k$ events for $(G, \c)$ show that $\S$ has a set cover of size at most $k$.

\end{document}